\newtheorem{theorem}{Theorem}
\newtheorem{lemma}{Lemma}
\theoremstyle{definition}
\newtheorem{remark}{Remark}
\def\ba{\left[\begin{array}}
\def\ea{\end{array}\right]}
\def\as{\overset{a.s}{\longrightarrow}}
\newcommand{\mb}[1]{
\mathbf{#1}
}
\newcommand{\Trf}[1]{
\text{Tr}\left(#1\right)
}
\def\tSINR{\text{SINR}}
\def\be{\begin{equation}}
\def\ee{\end{equation}}
\def\bea{\begin{align}}
\def\eea{\end{align}}
\def\bean{\begin{align*}}
\def\eean{\end{align*}}
\def\noi{\noindent}
\begin{document}
%
\title{Optimal Power Allocation and User Loading for Multiuser MISO Channels with Regularized Channel Inversion
\thanks{A part of this paper was presented at the Asilomar Conference on Signals, Systems and Computers, Pacific Grove, CA, USA, November 2011, \cite{Muharar_asil11}.}}
%
%
%

\author{Rusdha~Muharar,
        Randa~Zakhour,~\IEEEmembership{Member,~IEEE,}
        and~Jamie~Evans,~\IEEEmembership{Member,~IEEE}
\thanks{Rusdha Muharar is with the Department of Electrical and Computer Systems Engineering, Monash University, Clayton, VIC 3800, Australia  (e-mail: rusdha.muharar@monash.edu).}
\thanks{Randa Zakhour is part-time faculty at the Schools of Engineering of the American University of Science and Technology and of the Lebanese International University, Lebanon (email: randa.zakhour@gmail.com).}
\thanks{Jamie Evans is with the Department of Electrical and Computer Systems Engineering, Monash University, Clayton, VIC 3800, Australia (email: jamie.evans@monash.edu).}}

\maketitle

\begin{abstract}
We consider a multiuser system where a single transmitter equipped with multiple antennas (the base station) communicates with multiple users each with a single antenna. Regularized channel inversion is employed as the precoding strategy at the base station. Within this scenario we are interested in the problems of power allocation and user admission control  so as to maximize the system throughput, i.e., which users should we communicate with and what power should we use for each of the admitted users so as to get the highest sum rate. This is in general a very difficult problem but we do two things to allow some progress to be made. Firstly we consider the large system regime where the number of antennas at the base station is large along with the number of users. Secondly we cluster the downlink path gains of users into a finite number of groups. By doing this we are able to show that the optimal power allocation under an average transmit power constraint follows the well-known water filling scheme. We also investigate the user admission problem which reduces in the large system regime to optimization of the user loading in the system.
\end{abstract}

\begin{IEEEkeywords}
Multiuser precoding, regularized channel inversion, power allocation, large system analysis.
\end{IEEEkeywords}

%

\section{Introduction}
%
%
%
%
Multi-Input Multi-Output (MIMO) technologies are currently being adopted in many wireless communication standards such as fourth generation (4G) cellular networks. In multiuser MIMO downlink transmissions or broadcast channels (MIMO-BC), the capacity region was 
characterized in \cite{Weingarten_it06} and is achieved by employing Dirty Paper Coding (DPC) at the transmitter.  
However, implementing this technique in practice is computationally expensive \cite{Erez_it05,Spencer_commag04}.
Multiuser beamforming techniques such Zero Forcing (ZF) and 
Regularized Channel Inversion (RCI) are sub-optimal in term of the sum-rate but offer a lower complexity in the implementation. 
ZF can asymptotically achieve a sum rate that is close to that of DPC by appropriate power allocation and user scheduling \cite{Yoo_jsac06}. 

In Multi-Input Single-Output (MISO) broadcast channels, finding the optimal power allocation policy maximizing the sum rate for ZF is a convex optimization problem and has the solution that follows the water-filling (WF) scheme, see e.g., \cite{Yoo_jsac06}.  
In contrast, the optimal power allocation 
for the RCI precoder is a non-convex optimization problem with many local optima \cite{Xu_eurasip08,Xu_globecom07,Jin_icc09}, 
even in the case of all users having the same path-loss. 
In \cite{Xu_eurasip08,Xu_globecom07}, the authors investigated the sum rate maximization of  
MIMO broadcast channels with RCI under a total power constraint. 
They showed that the problem is a global difference of convex functions (d.c.) optimization problem and proposed the local gradient method 
to solve the problem. Their numerical results suggest that employing an RCI precoder with power allocation 
gives a better sum rate compared to the ZF. Reference \cite{Jin_icc09} extends the previous works, 
but in the MISO broadcast channels setting, by putting additional quality of service (QoS) constraints 
where each user's data rate should be above a specified minimum rate. The authors re-cast the 
optimization problem as a series of geometric programming (GP) problems, called iterative GP (IGP).   

As already mentioned, besides power allocation, selecting the users for transmission can improve the system performance. 
It has been shown in \cite{Yoo_jsac06} that a combination of water-filling  based power allocation 
and a user selection scheme, called semi-orthogonal user selection (SUS), in MISO BC systems with ZF precoder can approach the sum rate obtained by employing DPC 
when the number of users is large. A similar conclusion is also presented in \cite{Dimic_sp05, Wang_sp08} but by using greedy search algorithms for the 
user selection. The performance analysis of that algorithm for the case of finite (at most two) scheduled users was carried out in \cite{Ozyurt_comletter12}.
The authors in \cite{Sjobergh_icassp08} also proposed a greedy user selection for the RCI precoder: their algorithm is based on the closed form approximation of the expected sum rate. 
In \cite{Dai_jsac08}, Dai et al. studied MISO BC systems with ZF precoder under a finite-rate or quantized feedback. 
The proposed power allocation scheme is binary or on/off. They showed that the feedback rate and the received 
SNR affect the optimal number of active ('on') users. Moreover, their scheme can be applied in 
heterogeneous environments where the users may have different path-losses. A similar problem is also considered in \cite{Zhang_comm11,Zhang_isit09} but with different settings.
Besides considering the finite-rate feedback, the authors take into account the feedback delay by using a Gauss-Markov model; they  also assume a homogeneous environment and equal power allocation across the users. 
A sum rate approximation expression as a function of the number of users is derived. 
As a result, the number of users can be adjusted adaptively based on the feedback delay and channel quantization error (or feedback rate).  
This strategy is similar to the multi-mode transmission scheme considered in this paper.

In this paper, we will be considering power and user loading (also called group loading) allocations, in addition to regularization parameter optimization, in a  MISO BC with heterogeneous users. As mentioned earlier, solving the optimal power allocation problem alone is a challenging task \cite{Xu_eurasip08,Xu_globecom07,Jin_icc09}.
Adding the user loading allocation and the regularization parameter of the precoder into the optimization problem increases the complexity of the task.
To tackle the problem, we therefore apply two simplifying strategies. Firstly, we consider the large system regime where the number of users, $K$, and the number of transmit antennas of the BS, $N$, tend to  infinity with a fixed ratio. In this large system, we show how many of the related problems simplify and key insights can be obtained. Secondly, we divide all users into a finite number of groups or clusters, where all users in each group have approximately the same distance from the BS and therefore share the same distance dependent path-loss.

As a result of applying the first strategy, we are able to show that in the large system regime, each user's SINR tends to a deterministic quantity, called the limiting SINR: limiting SINRs depend only on users' allocated power and path-loss and not on the realization of the fast fading coefficients. Following that, under the second modeling strategy, we consider the joint optimization of the users powers and the regularization parameter.  For a fixed regularization parameter, we show that the optimal power allocation problem under an average power constraint is \textit{convex} and
the \textit{optimal power allocation follows the familiar water-filling strategy} \cite{Goldsmith_it97,Yoo_jsac06}. Similar results were also obtained in \cite{Wagner_it12} but with different approaches in the large system analysis. By substituting back the power allocation scheme to
the limiting sum rate expression, we can derive the optimal regularization parameter . Even though it does not yield a closed form expression, this substitution does result a one dimensional optimization problem which can be solved by standard line search algorithms.

It should be noted that the water-filling scheme may allocate zero power to some of the groups. Consequently, one may ask whether it is better to include the channel states of those groups in the precoder or not. This leads to the second part of the paper where we consider a multi-mode transmission scheme (see also \cite{Zhang_comm11,Zhang_isit09}). In this scheme, for a given total number of groups ($L$) and group loading of each group, we determine the optimal number of groups for the transmission and also which groups  the BS should communicate with. We arrange or sort the groups based on their path-losses in a descending order. We investigate two cases. In the first case, for each group, the BS can only decide between transmitting to all the users in the group or to none of them. 
We consider  a uniform group loading over the groups. In the mode $m$ transmission where the BS only communicates to $m$ groups (out of $L$), it is optimal for the BS to transmit to the first $m \leq L$ groups. The optimal mode can then be determined by comparing the maximum limiting sum rate of each mode.
In the second case, the BS is allowed to communicate with any subset  of the users in a group. We provide a necessary condition for the optimal group-loading allocation for each group. Assuming that $M \leq L$ groups are allocated positive power, the group loadings of the first $M-1$ groups should be set at their maximum value and the group loading for the $M$-th group can be in between zero and its maximum value. 
We also propose an algorithm to solve this optimization problem. Considering the group loading allocation, the algorithm offers a lower complexity  in comparison to  brute force search methods. In both cases, the optimal power allocation and regularization parameter are also considered.

Throughout the paper, the following notations are used. $\mathbb{E}[\cdot]$ denotes the statistical expectation and $\as$ refers to almost sure convergence. $\frac{\partial f}{\partial x}$ denotes the partial derivative of $f$ with respect to (w.r.t.) $x$  and $\frac{\partial f(x^\star)}{\partial x}$ represents $\frac{\partial f(x)}{\partial x}$ at $x=x^\star$. The circularly symmetric complex Gaussian vector with mean $\pmb{\mu}$ and covariance matrix $\mathbf{\Sigma}$ is denoted by $\mathcal{CN}(\pmb{\mu},\mathbf{\Sigma})$. $|a|$ denotes the magnitude of the complex variable $a$. $\|\cdot\|$ represents the Euclidean norm. $\succeq$ represents  element-wise inequality for the vectors. $\Trf{\cdot}$ denotes the trace of a matrix. $\mb{I}_N$ and $\mathbf{0}_N$ denote an $N\times N$ identity matrix and a $1\times N$ zero entries vector, respectively. $(\cdot)^T$ and $(\cdot)^H$ refer to the transpose and Hermitian transpose, respectively.  LHS and RHS refer to the left-hand and right-hand side of an equation, respectively.
 
\section{System Model}
\subsection{Finite-size system model}
We consider a MISO broadcast channel with an RCI precoder at the 
transmitter end. The base station has $N$ antennas and each of $K$ users is 
equipped with a single antenna. The propagation channel coefficient between 
 transmit antenna $n \in \{1,\dots,N\}$ and user $k \in \{1,\dots,K\}$ is 
denoted by $h_{k,n}$. Thus, the channel gain vector between the BS and user $ 
k$ is represented by the row vector $\mb{h}_k=[h_{k,1}, h_{k,2},\dots,h_{k,N}]
 \in \mathbb{C}^N$. It is assumed that the entries of $\mb{h}_k$ are i.i.d. and $\mb{h}_k \sim \mathcal{CN}(\mb{0},\mb{I}_N)$\footnote{
Even though here, we assume a specific distribution for $\mb{h}$, 
the large system analysis holds for any distribution of $\mb{h}_k$ if the entries of 
$\frac{1}{\sqrt{N}}\mb{h}_k$ are i.i.d. with zero mean, variance $\frac{1}{N}$ and have finite eighth moment (see e.g. \cite{Couillet_book11}).}.

We model the data symbol vector as 
$\mathbf{s}=\mathbf{\Lambda}^{1/2}\bar{\mathbf{s}}$, where $\bar{\mathbf{s}}$ is the normalized (power) data symbol vector, i.e., 
$\mathbb{E}[\bar{\mathbf{s}}\bar{\mathbf{s}}^H]=\mathbf{I}_K$. Let $\mathbf{\Lambda}=\text{diag}(p_1, p_2,\cdots, p_K)$ where  
$p_k$ denotes the power allocated to  user $k$. The transmitted data vector can written as
$\mathbf{x}=\mb{P}\mathbf{\Lambda}^{1/2}\bar{\mathbf{s}}$ and has a power
constraint $\mathbb{E}[\|\mathbf{x}\|_2^2]=P_d$. Let $\mb{H}=[\mb{h}_1\ \mb{h}_2\ \cdots\ \mb{h}_K]^T$ be the channel gain matrix. 
The  RCI precoder matrix, $\mb{P}$, takes the form  $\mb{P}=c \left(\mb{H}^H\mb{H} + \alpha \mb{I}_N \right)^{-1}\mb{H}^H$,
where $\alpha$ is the regularization parameter that controls the amount of interference introduced to the users and $c$ is the normalizing constant chosen to meet the transmit power constraint $\mathbb{E}[\|\mathbf{x}\|_2^2]=P_d$,  that is,
\be\label{eq:c2_power}
c^2=\frac{P_d}{\Trf{\mathbf{\Lambda}\mathbf{H}(\mathbf{H}^H\mathbf{H}+\alpha \mathbf{I}_N)^{-2}\mathbf{H}^H}}.
\ee 
The received signal for user $k$ is given by
\begin{align*}
y_k&=a_k \mathbf{h}_{k}\mb{x} + w_k \\
&= ca_k\sqrt{p_k}\mathbf{h}_k(\mathbf{H}^H\mathbf{H}+\alpha \mathbf{I}_N)^{-1}\mathbf{h}^H_k\bar{s}_k \\
& \qquad + \sum_{j\neq k}^K ca_k\sqrt{p_j}\mathbf{h}_k(\mathbf{H}^H\mathbf{H}+\alpha \mathbf{I}_N)^{-1}\mathbf{h}^H_j\bar{s}_j,
\end{align*}
where $a_k^2$ denotes the slow-varying path-loss between the base station and the receiver of user $k$. 
Therefore, the SINR attained by user $k$ can be expressed as
\begin{align}\label{eq:finite_sinr_power}
\tSINR_k=\frac{c^2a_k^2p_k |\mb{h}_k(\mb{H}^H\mb{H} + \alpha\mathbf{I}_N)^{-1}\mb{h}^H_k|^2}{\sum_{j\neq k}^K c^2a_k^2 p_j |\mathbf{h}_k(\mathbf{H}^H\mathbf{H}+\alpha \mathbf{I}_N)^{-1}\mathbf{h}^H_j|^2+\sigma^2}.
\end{align}
It is clear that the $\tSINR_k$ is a random quantity since it depends on the propagation channels that fluctuate randomly. In the large system limit, as we will see in the next section, this randomness disappears.


\subsection{Large-system regime SINR}
The following theorem provides the convergence of the $\tSINR_k$  \eqref{eq:finite_sinr_power} when the system dimensions, that is, $K$ and $N$, grow large with their ratio fixed. Note that this result also follows from \cite[Corollary 2]{Wagner_it12}.
\begin{theorem}\label{th:lim_SINR}
Let $\rho=\frac{\alpha}{N}$ be the normalized regularization parameter and $g(\beta,\rho)$ be the solution of $g(\beta,\rho)=\left(\rho + \frac{\beta}{1+g(\beta,\rho)}\right)^{-1}$. Let ${\cal P}=\displaystyle \lim_{K\to \infty} \frac{1}{K}\sum_{k=1}^K p_k$. Suppose that the limit ${\cal P}$ exists and is bounded. Then, as $K,N \to \infty$ with $\frac{K}{N}\to \beta$, $\tSINR_k$  \eqref{eq:finite_sinr_power}
converges almost surely to a deterministic quantity, $\tSINR_k^\infty$, given by
\begin{align}
\tSINR_k^\infty &= \bar{p}_k g(\beta,\rho)\frac{\gamma_k+\frac{\gamma_k\rho}{\beta}(1+g(\beta,\rho))^2}{\gamma_k+(1+g(\beta,\rho))^2},
\label{eq:lim_SINR_power} 
\end{align}
where $\gamma_k=\frac{P_d a_k^2}{\sigma^2}$ is defined as the effective SNR and $\bar{p}_k=\frac{p_k}{\cal P}$ is the normalized power w.r.t. ${\cal P}$.
\end{theorem}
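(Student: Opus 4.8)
The plan is to treat \eqref{eq:finite_sinr_power} as a ratio of quadratic and bilinear forms in the resolvent of the Gram matrix and to reduce each such form to a deterministic limit via the matrix inversion lemma together with standard trace lemmas from random matrix theory. Write $\mb{A}=\mb{H}^H\mb{H}+\alpha\mb{I}_N$, let $\mb{A}_k=\mb{A}-\mb{h}_k^H\mb{h}_k$, and (when two users must be removed) $\mb{A}_{kj}=\mb{A}-\mb{h}_k^H\mb{h}_k-\mb{h}_j^H\mb{h}_j$. Since $\alpha=N\rho$, we have $\mb{A}=N\big(\frac1N\mb{H}^H\mb{H}+\rho\mb{I}_N\big)$, so every quadratic form $\mb{h}_k\mb{A}^{-1}\mb{h}_k^H$ carries a hidden $\frac1N$ that turns a trace into a \emph{normalized} trace. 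The first step is to record the rank-one identities from the Sherman--Morrison formula,
\be\label{eq:sm1}
\mb{h}_k\mb{A}^{-1}\mb{h}_k^H=\frac{\mb{h}_k\mb{A}_k^{-1}\mb{h}_k^H}{1+\mb{h}_k\mb{A}_k^{-1}\mb{h}_k^H},\qquad \mb{A}^{-1}\mb{h}_k^H=\frac{\mb{A}_k^{-1}\mb{h}_k^H}{1+\mb{h}_k\mb{A}_k^{-1}\mb{h}_k^H},
\ee
together with the analogous two-step reduction, for $j\neq k$, of the cross term $\mb{h}_k\mb{A}^{-1}\mb{h}_j^H$ down to $\mb{h}_k\mb{A}_{kj}^{-1}\mb{h}_j^H$. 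The purpose of these identities is that in each resulting form the vector(s) $\mb{h}_k$ (and $\mb{h}_j$) are independent of the remaining matrix.

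The second step applies the trace lemma: for $\mb{h}_k\sim\mathcal{CN}(\mb{0},\mb{I}_N)$ independent of a matrix $\mb{M}$ of bounded spectral norm, $\frac1N\mb{h}_k\mb{M}\mb{h}_k^H-\frac1N\Trf{\mb{M}}\as 0$, with the companion statement $\frac1N\mb{h}_k\mb{M}\mb{h}_j^H\as 0$ for the bilinear form when $j\neq k$. Applying this to the resolvent of the deflated matrix (the deletion of a single $\mb{h}_k$ being harmless by rank-one perturbation) identifies the key limits: $\mb{h}_k\mb{A}_k^{-1}\mb{h}_k^H\as g$, where $g=g(\beta,\rho)$ is the a.s. limit of $\frac1N\Trf{(\frac1N\mb{H}^H\mb{H}+\rho\mb{I}_N)^{-1}}$, and $N\,\mb{h}_k\mb{A}_k^{-2}\mb{h}_k^H\as -g'$, with $g'=\frac{\partial g}{\partial\rho}$ the limit (up to sign) of $\frac1N\Trf{(\frac1N\mb{H}^H\mb{H}+\rho\mb{I}_N)^{-2}}$. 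The fixed-point characterization of $g$ stated in the theorem comes from the resolvent identity $\rho\,\frac1N\Trf{(\frac1N\mb{H}^H\mb{H}+\rho\mb{I}_N)^{-1}}=1-\frac1N\Trf{\frac1N\mb{H}^H\mb{H}(\frac1N\mb{H}^H\mb{H}+\rho\mb{I}_N)^{-1}}$ by expanding the right-hand side with \eqref{eq:sm1}; differentiating that fixed-point relation in $\rho$ yields the closed form $-g'=g\big/\big(\rho+\frac{\beta}{(1+g)^2}\big)$, which is what ultimately collapses the final expression.

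The third step assembles the ingredients. Combining \eqref{eq:sm1} with the trace limits gives $\mb{h}_k\mb{A}^{-1}\mb{h}_k^H\as\frac{g}{1+g}$ for the signal gain; $|\mb{h}_k\mb{A}^{-1}\mb{h}_j^H|^2\approx\frac{-g'/N}{(1+g)^4}$ for each interference cross term (first handling the inner bilinear form through $\mathbb{E}_{\mb{h}_j}|\mb{h}_k\mb{A}_{kj}^{-1}\mb{h}_j^H|^2=\mb{h}_k\mb{A}_{kj}^{-2}\mb{h}_k^H$); and $\mb{h}_k\mb{A}^{-2}\mb{h}_k^H\approx\frac{-g'/N}{(1+g)^2}$ for each summand of the normalization trace in \eqref{eq:c2_power}. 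Using $\frac1K\sum_k p_k\to{\cal P}$ and $\frac KN\to\beta$, the interference sum $\sum_{j\neq k}p_j|\mb{h}_k\mb{A}^{-1}\mb{h}_j^H|^2\to\frac{-g'\beta{\cal P}}{(1+g)^4}$ and the normalizing constant obeys $c^2\to\frac{P_d(1+g)^2}{-g'\beta{\cal P}}$. Substituting these limits into \eqref{eq:finite_sinr_power}, writing $P_da_k^2=\gamma_k\sigma^2$ and $p_k=\bar p_k{\cal P}$, and finally eliminating $-g'$ via $-g'=g\big/\big(\rho+\frac{\beta}{(1+g)^2}\big)$ causes the factors of $(1+g)^2$, $g$, $\beta$ and ${\cal P}$ to cancel and leaves exactly \eqref{eq:lim_SINR_power}.

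The main obstacle is not the algebra but making the interference convergence rigorous and \emph{almost sure}. That term is a sum of $\Theta(N)$ cross terms each of order $1/N$, so one cannot merely replace each summand by its conditional mean; instead one must control the accumulated fluctuations, typically by bounding the variance of the sum (or through a martingale-difference decomposition) and invoking Borel--Cantelli, while accounting for the weak dependence created because every summand shares the common vector $\mb{h}_k$ and the common matrix $\mb{A}$. A secondary technical point is justifying that $\frac1N\Trf{(\frac1N\mb{H}^H\mb{H}+\rho\mb{I}_N)^{-2}}\as-g'$, i.e.\ that the derivative of the limit equals the limit of the derivative; this is standard for the Marchenko--Pastur resolvent (the convergence being locally uniform in $\rho>0$) but must be cited. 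Since the statement also follows from \cite[Corollary~2]{Wagner_it12}, an alternative route is simply to verify that the $g(\beta,\rho)$ defined here coincides with their deterministic equivalent and to invoke that result directly.
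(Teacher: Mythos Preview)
Your proposal is correct and follows essentially the same route as the paper's proof in Appendix~\ref{proofLSA}: Sherman--Morrison (MIL) to decouple $\mb{h}_k$ and $\mb{h}_j$ from the resolvent, the trace lemma and rank-one perturbation to identify the limits $g$ and $-g'$, the same treatment of $c^2$, and the same final substitution of $\partial g/\partial\rho$ to reach \eqref{eq:lim_SINR_power}. The only notable technical difference is in the interference term: the paper removes $\mb{h}_k$ first via the trace lemma applied to $I_j=\frac1N\mb{h}_k\mb{M}_k\mb{h}_j^H\mb{h}_j\mb{M}_k\mb{h}_k^H$ and controls the whole sum by invoking a \emph{uniform} trace lemma (\cite[Lemma~5.1]{Liang_it07}) that gives $\max_{j\le K}|I_j-\frac1N\mb{h}_j\mb{M}_k^2\mb{h}_j^H|\as0$ directly, whereas you propose to bound the variance of the aggregate and use Borel--Cantelli. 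Either device works; citing the uniform lemma is the cleaner way to dispose of the obstacle you flagged.
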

\begin{IEEEproof}
Refer to Appendix \ref{proofLSA}.
\end{IEEEproof}

We call $\tSINR_k^\infty$ the limiting SINR of user $k$. Note that it is different for each user and depends on $a_k$ and $p_k$.
Let $f_k(\beta,\rho)$ be the RHS of \eqref{eq:lim_SINR_power} excluding $\bar{p}_k$. Then, we can write \eqref{eq:lim_SINR_power} as 
\be\label{eq:lim_SINR_power2}
\tSINR_k^\infty = \bar{p}_kf_k(\beta,\rho).
\ee
Note that $f_k$ also depends on $a_k$ and $P_d/\sigma^2$ via $\gamma_k$ but these are assumed to be fixed throughout this paper. It is also obvious that $f_k$ is independent of $\bar{p}_k$. This property will ease the analysis in finding the power allocation that maximizes the (limiting) sum rate in the next section. 

\section[Optimal Power allocation and regularization parameter]{Optimal Power allocation and regularization parameter}\label{sec:power_alloc}
Let us consider the following scenario. We divide all $K$ users into $L$  groups, where $L$ is finite. All users in each group are assumed to have the same path-loss. Here, we assume that $a_1 \geq a_2 \geq \dots \geq a_L$. The number of users in group $j$ is denoted by $K_j$, with $\sum_{j=1}^L K_j=K$.
We also assume that $K_j$ and $N$ tend to be large with a fixed ratio $\beta_j=\frac{K_j}{N}$. It represents the user or group loading of group $j$. 
Since the path-loss and other parameters $\beta,\rho$ as well as SNR are the same for all users in a group,  
then based on \eqref{eq:lim_SINR_power}, we can assume that the  power allocated to each user in that group is also the same. 
This assumption holds for the rest of the paper. 

Based on the above scenario, we can define the limiting achievable sum rate per antenna as follows
\begin{align}\label{eq:lim_sumrate_power}
R_\text{sum}^\infty=\sum_{j=1}^L \beta_j\log\left(1+\text{SINR}_j^\infty\right).
\end{align}  

Our goal in this section is to find the optimal power allocation that maximizes  $R_\text{sum}^\infty$.
Moreover, it is also interesting to explore how the regularization parameter of the RCI precoder adapts to different path-losses and also user powers.    
A joint optimization problem can be formulated as follows, 
%
\begin{eqnarray}
\mathbf{P1}:  &\underset{\bar{\mathbf{p}}\succeq \mb{0},\rho\geq 0}{\text{max.}} & R_\text{sum}^\infty \nonumber\\
&\text{s.t.} & \displaystyle \sum_{j=1}^L \frac{\beta_{j}}{\beta}\bar{p}_j=\frac{1}{\beta}\pmb{\beta}^T{\bar{\mathbf{p}}}\leq 1. \label{const:P1_avPower}
\label{eq:opt_prob}
\end{eqnarray}
In the above, we use lowercase bold letters to denote column vectors with size $L$, e.g., $\bar{\mb{p}}=[\bar{p}_1, \bar{p}_2, \dots, \bar{p}_L]^T$. This notation will be used for the rest of this paper, unless otherwise stated.      
Note that the constraint \eqref{const:P1_avPower} can be considered as the large system average power constraint. $\mathbf{P1}$ also requires $\bar{\mathbf{p}}$ and $\rho$ to be non-negative.

Before addressing the solution of $\mathbf{P1}$, we characterize the objective function as a function of $\bar{p}_j$. 
Let $R_{\text{sum},j}^\infty=\beta_j\log\left(1+\text{SINR}_j^\infty\right)$ denote the sum rate for group $j$. It can be checked that  it is an increasing function in $p_j$. Moreover, we can show that the following lemma holds.
\begin{lemma}\label{lemma:conc_p}
The sum rate per antenna $R_\text{sum}^\infty$ is concave in $\bar{\mathbf{p}}$.
\end{lemma}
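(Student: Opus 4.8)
The plan is to exploit the separable structure of the objective together with the key property emphasized after Theorem~\ref{th:lim_SINR}: that $f_j(\beta,\rho)$ is independent of $\bar{p}_j$. Starting from \eqref{eq:lim_sumrate_power} and \eqref{eq:lim_SINR_power2}, I would write
\be
R_\text{sum}^\infty = \sum_{j=1}^L \beta_j\log\left(1+\bar{p}_j f_j(\beta,\rho)\right),
\ee
where, for a fixed regularization parameter $\rho$, each $f_j(\beta,\rho)$ depends only on $\gamma_j$ (equivalently $a_j$), $\beta$, and $\rho$, and is hence a strictly positive constant that does not vary with the power vector $\bar{\mathbf{p}}$.

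First I would fix $\rho\geq 0$ and regard each $f_j:=f_j(\beta,\rho)>0$ as a constant, so that the $j$-th summand $R_{\text{sum},j}^\infty=\beta_j\log(1+\bar{p}_j f_j)$ is a function of the single scalar $\bar{p}_j$. A one-line second-derivative computation gives
\be
\frac{\partial^2 R_{\text{sum},j}^\infty}{\partial \bar{p}_j^2}=-\frac{\beta_j f_j^2}{\left(1+\bar{p}_j f_j\right)^2}<0,
\ee
so each summand is strictly concave in its own power variable on $\bar{p}_j\geq 0$.

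Next I would assemble the Hessian of $R_\text{sum}^\infty$ with respect to $\bar{\mathbf{p}}$. Because the $j$-th summand depends on no component of $\bar{\mathbf{p}}$ other than $\bar{p}_j$, every mixed partial derivative vanishes and the Hessian is \emph{diagonal}, with $j$-th diagonal entry equal to the negative quantity above. A diagonal matrix with non-positive diagonal entries is negative semidefinite, whence $R_\text{sum}^\infty$ is concave in $\bar{\mathbf{p}}$ (in fact strictly concave on the positive orthant).

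I do not expect any serious obstacle here: the only substantive ingredient is the independence of $f_j$ from $\bar{p}_j$, which is already established in the remark following Theorem~\ref{th:lim_SINR}. That independence is precisely what makes the objective separable across groups and collapses the multivariate concavity question into a single one-dimensional second-derivative check per group, with the cross terms automatically absent.
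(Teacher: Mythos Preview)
Your proposal is correct and follows essentially the same route as the paper: compute the second derivative of each per-group term $R_{\text{sum},j}^\infty$ with respect to $\bar{p}_j$, note it is negative, and then use separability to conclude concavity of the sum. Your explicit observation that the Hessian is diagonal is a slightly cleaner way of stating the ``linear combination preserves concavity'' step the paper uses, but the underlying argument is identical.
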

\begin{proof}
The second derivative of the limiting SINR w.r.t. $\bar{p}_j$ is 
\[
\frac{\partial^2 \text{SINR}_j^\infty}{\partial \bar{p}_j^2}=-\frac{f^2_j(\beta,\rho)}{(1+\bar{p}_jf_j(\beta,\rho))^2} < 0.
\]
This implies that $\text{SINR}_j^\infty$ is concave in $\bar{p}_j$. Since the $\log$ operation does not change the concavity, therefore $R_{\text{sum},j}^{\infty}$ is also concave in $\bar{p}_j$. Moreover, $R_\text{sum}^\infty$ is a linear combination of $R_{\text{sum},j}^\infty$ and this operation preserves the concavity.   
\end{proof}

From the lemma above, we can see that for a fixed $\rho$, $\mathbf{P1}$ is a convex program because $-R_{\text{sum}}^\infty$ is convex in $\bar{\mathbf{p}}$ and the constraints are linear.
For a fixed $\bar{\mathbf{p}}$, $\tSINR_k^\infty$ is not concave in $\rho$ but quasi-concave \cite{RusdhaPrep}. Since $\log$ is a non-decreasing function then $R_{\text{sum},j}^\infty$ is also quasi-concave (not concave) in $\rho$. Since a linear combination operation does not necessarily preserve the quasi-concavity, the sum rate need not be quasi-concave. 


Now, let us consider the Lagrangian for $\mb{P1}$, as stated below\footnote{For notational simplicity, we use $\mathcal{L}$ to denote the Lagrangian $\mathcal{L}(\mb{x},\pmb{\lambda})$, where $\mb{x}$ and $\pmb{\lambda}$ are the optimizing variables and the Lagrange multipliers, respectively.}
\[ 
\mathcal{L} = \sum_{j=1}^L \beta_j \log(1+\bar{p}_jf_j(\beta,\rho)) - \lambda\left(\frac{1}{\beta}\pmb{\beta}^T\bar{\mb{p}}-1\right) + \pmb{\xi}^T\bar{\mb{p}} +  \kappa \rho,
\]
where $\lambda$ and $\pmb{\xi}$ are the Lagrange multipliers for the average power and non-negative power constraints respectively, and $\kappa$ is the Lagrange multiplier for the constraint $\rho \geq 0$. 
Let $\bar{\mb{p}}^\star,\rho^\star$ be the solutions for $\mb{P1}$. At these points, the associated Karush-Kuhn-Tucker (KKT) optimality conditions are
\begin{align}
\frac{\partial \mathcal{L}}{\partial \bar{p}_j} &= \beta_j\left(\frac{f_j(\beta,\rho^\star)}{1+\bar{p}_j^\star f_j(\beta,\rho^\star)}-\lambda\right) + \xi_j = 0 \label{eq:stat_pj_power_P1}\\
\frac{\partial \mathcal{L}}{\partial \rho} &= \sum_{j=1}^L \frac{\beta_j \bar{p}_j^\star}{1+\bar{p}_j^\star f_j(\beta,\rho)}\frac{\partial f_j(\beta,\rho^\star)}{\partial \rho} +\kappa = 0, \label{eq:stat_rho_power_P1} 
\end{align} 
and
\begin{align}
	&\frac{1}{\beta}\pmb{\beta}^T\bar{\mb{p}}^\star \leq 1, \lambda\left(\frac{1}{\beta}\pmb{\beta}^T\bar{\mb{p}}^\star-1\right) = 0, \lambda \geq 0,\label{eq:KKT_P1_power1}\\
	&\bar{\mb{p}}^\star\succeq 0, \xi_j\bar{p}_j^\star=0, \xi_j \geq 0, \ j=1,\dots,L, \label{eq:KKT_P1_power2}\\
	&\rho^\star \geq 0, \kappa \rho^\star=0, \kappa \geq 0.  \label{eq:KKT_P1_rho}
\end{align}

Recall that for a given $\rho$, $\mb{P1}$ reduces to a convex program. In this case, it is easy to show that the KKT conditions \eqref{eq:stat_pj_power_P1}, \eqref{eq:KKT_P1_power1} and \eqref{eq:KKT_P1_power2} lead to the optimal power allocation strategy maximizing the limiting sum rate, as presented in the following theorem.     

\begin{theorem}\label{Th:power_alloc}
For a fixed $\rho$, the optimal power allocation for the optimization problem $\mathbf{P1}$ follows the water-filling (WF) scheme and is given by
\begin{equation}
\bar{p}_j^\star=\left[\frac{1}{\lambda} -\frac{1}{f_j(\beta,\rho)}\right]_+
\label{eq:power_alloc}
\end{equation}
where $[x]_+=\max(0,x)$. The constant (Lagrange multiplier) $\lambda$ is the solution of
\[
\sum_{j=1}^L \beta_j\bar{p}_j^\star = \beta,
\]
for which the average power constraint is satisfied with equality.
\end{theorem}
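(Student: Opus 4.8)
The plan is to exploit the convexity established in Lemma~\ref{lemma:conc_p}: for a fixed $\rho$, the objective $-R_{\text{sum}}^\infty$ is convex in $\bar{\mb{p}}$ and the feasible region is a polyhedron (the affine average-power constraint \eqref{const:P1_avPower} together with the non-negativity constraints), so the constraints qualify automatically and the KKT conditions \eqref{eq:stat_pj_power_P1}, \eqref{eq:KKT_P1_power1} and \eqref{eq:KKT_P1_power2} are both necessary and sufficient for global optimality. It therefore suffices to solve this KKT system and read off the claimed form. First I would isolate $\bar{p}_j^\star$ from the stationarity condition \eqref{eq:stat_pj_power_P1} by splitting on the sign of $\bar{p}_j^\star$ and invoking the complementary slackness \eqref{eq:KKT_P1_power2}.

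In the active case $\bar{p}_j^\star > 0$, complementary slackness forces $\xi_j = 0$, so \eqref{eq:stat_pj_power_P1} reduces to $\frac{f_j(\beta,\rho)}{1+\bar{p}_j^\star f_j(\beta,\rho)} = \lambda$; solving this linear equation gives $\bar{p}_j^\star = \frac{1}{\lambda} - \frac{1}{f_j(\beta,\rho)}$, and the requirement $\bar{p}_j^\star > 0$ is exactly $f_j(\beta,\rho) > \lambda$. In the inactive case $\bar{p}_j^\star = 0$, we have $\xi_j \geq 0$ and \eqref{eq:stat_pj_power_P1} yields $\xi_j = \beta_j(\lambda - f_j(\beta,\rho)) \geq 0$, i.e. $f_j(\beta,\rho) \leq \lambda$, which is precisely the regime in which $\frac{1}{\lambda} - \frac{1}{f_j(\beta,\rho)} \leq 0$. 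Combining the two cases collapses into the single expression $\bar{p}_j^\star = [\frac{1}{\lambda} - \frac{1}{f_j(\beta,\rho)}]_+$, the water-filling rule, with $1/\lambda$ playing the role of the water level and $1/f_j(\beta,\rho)$ that of the inverse effective channel quality.

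It remains to pin down $\lambda$. Here I would argue $\lambda > 0$ strictly: since each $f_j(\beta,\rho) > 0$ (the solution $g(\beta,\rho)$ and hence the $\gamma_k$-factor in \eqref{eq:lim_SINR_power} are strictly positive), letting $\lambda \downarrow 0$ would drive every $\bar{p}_j^\star \to \infty$ and violate \eqref{const:P1_avPower}; hence the power constraint must be binding at the optimum. Complementary slackness \eqref{eq:KKT_P1_power1} then forces $\frac{1}{\beta}\pmb{\beta}^T\bar{\mb{p}}^\star = 1$, that is $\sum_{j=1}^L \beta_j \bar{p}_j^\star = \beta$, which is the stated equation defining $\lambda$.

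The main obstacle is the determination of $\lambda$: I expect the only genuine work to be verifying that the water-level equation $\sum_{j=1}^L \beta_j [\frac{1}{\lambda} - \frac{1}{f_j(\beta,\rho)}]_+ = \beta$ has a unique root $\lambda > 0$. This follows because its left-hand side is continuous and strictly decreasing in $\lambda$ on the interval $(0, \max_j f_j(\beta,\rho)]$ where at least one group is active, running from $+\infty$ as $\lambda \downarrow 0$ down to $0$ as $\lambda \uparrow \max_j f_j(\beta,\rho)$, so it crosses the level $\beta$ exactly once; this simultaneously guarantees feasibility and uniqueness of the water level. Everything else reduces to the routine case-analysis of the KKT system above, and the concavity from Lemma~\ref{lemma:conc_p} certifies that the point so constructed is the global maximizer of $\mathbf{P1}$ for the given $\rho$.
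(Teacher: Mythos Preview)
Your proposal is correct and follows essentially the same approach as the paper: the paper sets up the KKT conditions \eqref{eq:stat_pj_power_P1}, \eqref{eq:KKT_P1_power1}, \eqref{eq:KKT_P1_power2}, notes that for fixed $\rho$ the problem is convex (Lemma~\ref{lemma:conc_p}), and then simply states that it is ``easy to show'' that these conditions yield the water-filling rule, without writing out the case analysis. Your detailed split on $\bar{p}_j^\star>0$ versus $\bar{p}_j^\star=0$, the complementary-slackness argument, and the monotonicity argument pinning down a unique $\lambda>0$ are exactly the routine steps the paper omits; the paper's subsequent Remark supplies an alternative, constructive recipe for $\lambda$ (sort groups, guess the number $m$ of active groups, solve the linear equation), but that is a computational observation rather than a different proof.
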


In the WF scheme above, $1/\lambda$ can be perceived as the water level. It determines how  power is poured to each user and is based on the value of $f_j(\beta,\rho)$. Recall that the limiting SINR is given by $\bar{p}_j^\star f_j(\beta,\rho)$. It can be checked that $f_j(\beta,\rho)$ is increasing in $a_j$. Thus, more power will be allocated for the users with better channels which can be represented by the path losses $\{a_j\}$. 
Note that in this case, fairness amongst  users  could be an issue since some users might have zero rate. 

\begin{remark}
To find $\lambda$ we can follow the following steps (see also \cite{Cui_sp07}). Since we assume $a_1 \geq a_2 \geq \dots \geq a_L$,  then $\bar{p}_1^\star \geq \bar{p}_2^\star \geq \dots \geq \bar{p}_L^\star $. Now let us assume that the first $m$ groups have non-zero power. To determine $\lambda$, we just need to solve $\sum_{j=1}^m \beta_j\bar{p}_j^\star=\beta$. 
Using $\bar{p}_j^\star$ in \eqref{eq:power_alloc}, it is easy to show that
\[ 
\lambda=\dfrac{\sum_{j=1}^m \beta_j}{\beta+\sum_{j=1}^m \frac{\beta_j}{f_j(\beta,\rho)}}.
\]
The power allocated to group $j$ is then given by
\[ 
\bar{p}_j^\star = \frac{\beta+\sum_{j=1}^m \frac{\beta_j}{f_j(\beta,\rho)}}{\sum_{j=1}^m \beta_j}-\frac{1}{f_j(\beta,\rho)}
\]
To determine $m$, we just need to find $m$ such that $\bar{p}_m^\star > 0$ and $\bar{p}_{m+1}^\star \leq 0$. \hfill\IEEEQEDclosed
\end{remark}

  By using the KKT optimality conditions above, the optimal $\rho^\star$ can be found as stated in the following theorem.
\begin{theorem}\label{th:opt_rho}
Let $\bar{\mb{p}}^\star$ be as in \eqref{eq:power_alloc}. The maximum limiting sum rate, $R_\text{sum}^\infty$, is obtained by choosing $\rho^\star$ that  satisfies
\be\label{eq:opt_rho1_power}
\sum_{j=1}^L \frac{\beta_j \bar{p}_j^\star f_j^2(\beta,\rho^\star)}{1+\bar{p}_j^\star f_j(\beta,\rho^\star)}\left(\frac{\rho^\star}{\beta}-\frac{1}{\gamma_j}\right)=0.
\ee
and it is bounded by
\be\label{eq:interval_opt_rho_power}
\frac{\beta}{\gamma_1} \leq \rho^\star \leq \frac{\beta}{\gamma_L}. 
\ee
\end{theorem}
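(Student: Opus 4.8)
The plan is to read off \eqref{eq:opt_rho1_power} from the $\rho$-stationarity KKT condition \eqref{eq:stat_rho_power_P1} and then to localize its root by a sign analysis. Everything hinges on one ingredient: a clean closed form for $\partial f_j(\beta,\rho)/\partial\rho$ in which the group index $j$ separates out. First I would differentiate the fixed-point relation $g=\bigl(\rho+\tfrac{\beta}{1+g}\bigr)^{-1}$ implicitly: writing it as $1/g=\rho+\beta/(1+g)$ and differentiating gives $\partial g/\partial\rho=-g^2(1+g)/(1+\rho g^2)$, where the tidy form uses the equivalent identity $\beta=(1+g)(1-\rho g)/g$. I would then write $f_j=g\gamma_j\,\bigl(\beta+\rho(1+g)^2\bigr)/\bigl[\beta(\gamma_j+(1+g)^2)\bigr]$ (this is the RHS of \eqref{eq:lim_SINR_power} with $\bar p_k$ stripped off) and differentiate it by logarithmic differentiation, repeatedly substituting the fixed-point consequences, notably $\beta+\rho(1+g)^2=(1+g)(1+\rho g^2)/g$. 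I expect the whole expression to collapse to
\[
\frac{\partial f_j}{\partial\rho}=-\frac{2\beta^2 g^3}{(1+\rho g^2)^3}\,f_j^2\left(\frac{\rho}{\beta}-\frac{1}{\gamma_j}\right).
\]
The decisive structural feature is that the prefactor $C(\beta,\rho):=-2\beta^2 g^3/(1+\rho g^2)^3$ is common to all groups and strictly negative, so the only $j$-dependence sits in $f_j^2\ge 0$ and the sign-indefinite factor $\rho/\beta-1/\gamma_j$.

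Granting this, the first claim is immediate. Since (by the bound to be proved) $\rho^\star\ge\beta/\gamma_1>0$, complementary slackness in \eqref{eq:KKT_P1_rho} forces $\kappa=0$, so \eqref{eq:stat_rho_power_P1} becomes $\sum_j \frac{\beta_j\bar{p}_j^\star}{1+\bar{p}_j^\star f_j}\frac{\partial f_j}{\partial\rho}=0$; inserting the factorization and dividing out the nonzero common factor $C(\beta,\rho)$ yields exactly \eqref{eq:opt_rho1_power}. To avoid any circularity with the bound, I would equivalently work with the value function $\tilde R(\rho):=\max_{\bar{\mb{p}}}R_\text{sum}^\infty$ (the sum rate after optimal water-filling at fixed $\rho$, which exists by Lemma~\ref{lemma:conc_p}). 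Because $\rho$ does not enter the power constraint, the envelope theorem gives $\tilde R'(\rho)=C(\beta,\rho)\,\phi(\rho)$ with $\phi(\rho):=\sum_j \frac{\beta_j\bar{p}_j^\star f_j^2}{1+\bar{p}_j^\star f_j}\bigl(\frac{\rho}{\beta}-\frac{1}{\gamma_j}\bigr)$, so a stationary point of $\tilde R$ is precisely a zero of $\phi$, i.e.\ \eqref{eq:opt_rho1_power}.

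For the bounds \eqref{eq:interval_opt_rho_power} I would analyze the sign of $\phi$, exploiting the ordering $\gamma_1\ge\cdots\ge\gamma_L$ and the fact that each weight $\beta_j\bar{p}_j^\star f_j^2/(1+\bar{p}_j^\star f_j)$ is nonnegative and vanishes for groups with $\bar{p}_j^\star=0$ (so the changing active set does not affect the signs). For every $\rho\le\beta/\gamma_1$ one has $\rho/\beta-1/\gamma_j\le 1/\gamma_1-1/\gamma_j\le 0$ for all $j$, hence $\phi(\rho)\le 0$ and $\tilde R'(\rho)=C\phi\ge 0$; symmetrically, for $\rho\ge\beta/\gamma_L$ every bracket is nonnegative, so $\phi(\rho)\ge 0$ and $\tilde R'(\rho)\le 0$. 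Thus $\tilde R$ is nondecreasing on $[0,\beta/\gamma_1]$ and nonincreasing on $[\beta/\gamma_L,\infty)$, so its maximizer (and any stationary point) must lie in $[\beta/\gamma_1,\beta/\gamma_L]$; continuity of $\phi$ and the sign change at the endpoints also guarantee a root there via the intermediate value theorem.

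I expect the main obstacle to be the algebraic reduction of $\partial f_j/\partial\rho$ to the separated form displayed above: the unsimplified derivative is a ratio of several $\rho$-dependent terms (through both $g$ and $(1+g)^2$), and collapsing it relies on spotting and reusing the two fixed-point identities $\beta=(1+g)(1-\rho g)/g$ and $\beta+\rho(1+g)^2=(1+g)(1+\rho g^2)/g$. Once the common negative prefactor $C(\beta,\rho)$ is isolated, the derivation of \eqref{eq:opt_rho1_power} and the monotonicity/sign argument for \eqref{eq:interval_opt_rho_power} are routine.
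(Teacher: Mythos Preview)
Your proposal is correct and follows essentially the same route as the paper's proof: both compute $\partial f_j/\partial\rho$, extract a common strictly negative prefactor multiplied by $f_j^2(\rho/\beta-1/\gamma_j)$, and then run the identical sign analysis on the brackets $\rho/\beta-1/\gamma_j$ to trap $\rho^\star$ in $[\beta/\gamma_1,\beta/\gamma_L]$, which in turn forces $\kappa=0$ and yields \eqref{eq:opt_rho1_power}. Your algebraic form $C(\beta,\rho)=-2\beta^2 g^3/(1+\rho g^2)^3$ is just a rewriting of the paper's prefactor $\tfrac{2(1/g+1)}{[1+(\rho/\beta)(1+g)^2]^2}\,\partial g/\partial\rho$ via the fixed-point identities you list, and your invocation of the envelope theorem on $\tilde R(\rho)$ is a clean way to phrase what the paper does by plugging $\bar{\mb p}=\bar{\mb p}^\star$ into the $\rho$-derivative, but the substance is the same.
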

\begin{proof}
See Appendix \ref{app:opt_rho}.
\end{proof}

Note that by using \eqref{eq:power_alloc} in \eqref{eq:opt_rho1_power}, it is straightforward to see that \eqref{eq:opt_rho1_power} becomes a one-dimensional zero/root-finding problem. Thus, the optimal $\rho$ can be found by using existing line search algorithms for the interval given in \eqref{eq:interval_opt_rho_power}. 
\begin{figure}
\vspace{0.5cm}
\centering
\begin{tikzpicture}
		
	\begin{axis}[compat=newest, small,
		xmin=0, xmax=21,
		width=8.5 cm,
		xlabel={$\frac{P_d}{\sigma^2}$ (dB)},
		ylabel={$\mathbb{E}[R_\text{sum}]$},
		grid=major,
		legend pos=south east,
	]
	\pgfplotsset{
		every mark/.append style={solid}
		}
	\addplot table {Rsum_p_FS.dat};
	\addplot[mark=x] table {Rsum_p_LS.dat};
	\legend{\small $\bar{\mb{p}}^\star_\text{\tiny FS}$, \small $\bar{\mb{p}}^\star$};
	\end{axis}
\end{tikzpicture}
\caption{Comparison of the average sum rate between using $\bar{\mathbf{p}}=\bar{\mathbf{p}}^{\star}_\text{\tiny FS}$ and $\bar{\mathbf{p}}=\bar{\mathbf{p}}^\star$ for $L=2$, $\beta=1$, $N=8$, $\beta_j=1/2$, $\rho=\rho^\star$ and $a_j^2=1/j^2$.} 
\label{fig:Rsum_FS_vs_LS}
\end{figure}
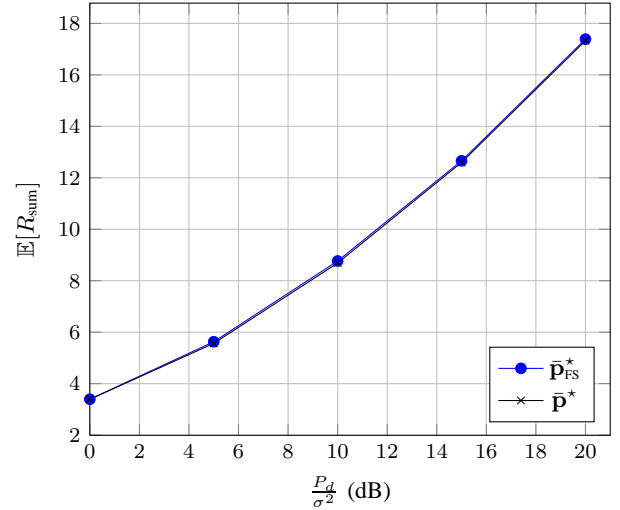

Figure \ref{fig:Rsum_FS_vs_LS} illustrates the validity of using the large system results for the finite size system. 
We generate 500 channel realizations and for each realization we compute the optimal power allocation, denoted by $\bar{\mathbf{p}}^*_\text{\tiny FS}$, by a grid search. In the plot, we compare the average sum rate, denoted by $\mathbb{E}[R_\text{sum}]$, between using the power allocation $\bar{\mathbf{p}}$ in \eqref{eq:power_alloc} and $\bar{\mathbf{p}}^*_\text{\tiny FS}$. The gap between the curves in the figure is very small and can be said negligible. 

\section{Multimode Broadcast Channels}\label{s:multimode}

In the previous sections, we considered the optimal power allocation that maximizes the (limiting) sum rate where the base station (BS) communicates to all $L$  groups simultaneously. In that setting, the channels from all groups of users were included in the precoding matrix, even those allocated zero power (i.e. not actually served). This can be seen as an inefficiency and  leads us to ask: how much do we stand to gain if we take care of this inefficiency?
As an illustration, let us consider the case of $L=3$. We set the group loading for each group to be uniform i.e., $\beta_j=\beta/L$. 
Figure \ref{fig:multimode} shows the limiting sum rate obtained when the BS communicates to only the first $m \leq L$ groups, denoted by $R_\text{sum}^{(m),\infty}$.
This means that we only include the channel of the users from these $m$ groups in the system model and in designing the precoder.  We call this scheme mode-$m$ transmission. The figure demonstrates that for some values of cell-loading $\beta$, the maximum sum rate is achieved when $m < L$. The simulation also shows that the optimal $m$ changes with $\beta$: we call this scheme \emph{multi-mode transmission}.

In multimode transmission, it is clear that there are ${{L}\choose{m}}$ combinations of the groups that can be chosen by the base station to communicate with. The question is which mode and group combination that will give the highest sum rate? Intuition would suggest that if the base station is communicating with $m$ groups then we would choose the $m$ groups with the strongest channel gains in order to maximize the sum rate. We would then need test at most $L$ mode and group combinations. Below we show that this intuition is indeed correct for different assumptions on $\beta_j$ although the proof is non-trivial.  

\subsection{Binary Group Loading}
In this section, we investigate the following optimization problem:
\begin{equation*}\label{eq:opt_prob_2}
\begin{array}{ccl}
\mathbf{P2}: & \underset{\bar{\mathbf{p}} \succeq \mb{0},\pmb{\beta},\rho,\beta \geq 0}{\text{max.}} & R_\text{sum}^\infty \\
&\text{s.t.} & \dfrac{1}{\beta}\pmb{\beta}^T\bar{\mb{p}} \leq 1\\ 
& & \dfrac{1}{\beta}\pmb{\beta}^T\mb{1} = 1\\ 
& &  \beta_j \in \{0,\beta_{j,\max}\},
\end{array}
\end{equation*}
where $\mb{1}$ is a column vector with all 1 entries. It can be seen that $\mathbf{P2}$ is similar to $\mathbf{P1}$, but with additional design variables: $\pmb{\beta}, \beta$ and additional constraints related to them. In $\mathbf{P2}$, $\beta_j$ is only allowed to have value either 0 or $\beta_{j,\max}$ and we call this scheme \textit{binary user loading allocation}. Therefore, $\beta_j$ will determine whether the BS  transmits to users in group $j$ or not. The latter occurs when $\beta_j=0$. In that case, the channel gain matrix of the users in group $j$ is not included in the precoder design.  

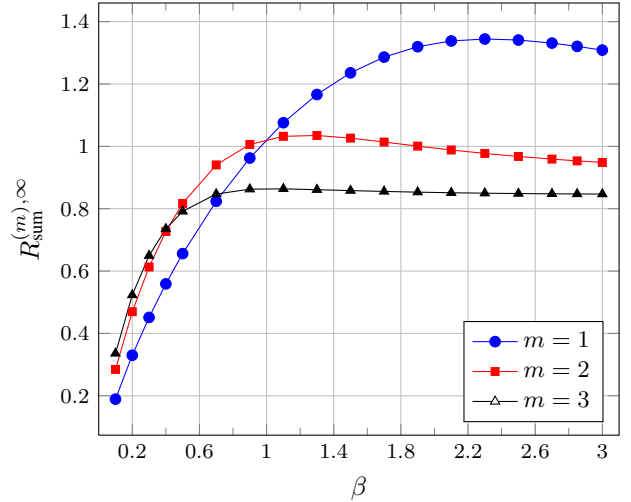
\begin{figure}[t]
\vspace{0.5cm}
\centering
\begin{tikzpicture}
		
	\begin{axis}[small,
		compat=newest,
		xmin=0, xmax=3.1,
		width=8.5 cm,
		xlabel={$\beta$},
		ylabel={$R_\text{sum}^{(m),\infty}$},
		xtick={0.2,0.6,1,1.4,1.8,2.2,2.6,3},
		minor x tick num=1,
		grid=major,
		legend pos=south east,
	]
	\pgfplotsset{
		every mark/.append style={solid}
		}
	\addplot table[x=beta,y=mode1] {Rsum_modes_L3.dat};
	\addplot[color=red, mark=square*, mark size=1.5] table[x=beta,y=mode2] {Rsum_modes_L3.dat};
	\addplot[mark=triangle*] table[x=beta,y=mode3] {Rsum_modes_L3.dat};
	\legend{\small $m=1$, \small $m=2$, \small $m=3$};
	\end{axis}
\end{tikzpicture}
\caption{Multimode Transmission for $L=3$, $\beta_j=\beta/L$ and $a_j^2=1/j^2$ .}
\label{fig:multimode}
\end{figure}

First, let us investigate the optimal strategies for $\mathbf{P2}$ when $\beta_{j,\max}$ is the same for all groups, i.e., $\beta_{j,\max}=\breve{\beta}$. 
Let us consider the mode-$m$ transmission. In that case, we have $m$ groups with $\beta_j=\breve{\beta}$ and the remaining groups have $\beta_j=0$. 
Let $\mathcal{G}\subset \{1,2,\dots, L\}$, $|\mathcal{G}|=m$ be the set of the group indexes that the BS communicates to ($\beta_j > 0, j\in \mathcal{G}$). 
Then, the maximum limiting sum rate achieved for a given $\mathcal{G}$ can be obtained by solving 
\be\label{eq:opt_mode_m}
\begin{array}{cl}
 \underset{\bar{\mathbf{p}}\succeq \mb{0},\rho}{\text{max.}} & \displaystyle R_\text{sum}^{(m),\infty}(\mathcal{G}) = \sum_{j \in \mathcal{G}} \beta_j\log(1+\bar{p_j}f_j(\beta,\rho))  \\
\text{s.t.} & \displaystyle \frac {1}{m} \sum_{j \in \mathcal{G}} \bar{p}_j\leq 1 .
\end{array}
\ee
We should note that in the average power constraint we use the fact that the total group loading $\beta$ is  $\sum_{j\in \mathcal{G}} \beta_j=m\breve{\beta}$. We can also see that  \eqref{eq:opt_mode_m} is equivalent to $\mathbf{P1}$.  Thus, its solutions can be obtained by using the same strategies as in solving $\mathbf{P1}$. The maximum limiting sum rate for mode-$m$ transmission can be attained by evaluating \eqref{eq:opt_mode_m} for every possible choice of group combinations $\mathcal{G}$, i.e.,
\be\label{eq:opt_mode_m_2}
\breve{R}_\text{sum}^{(m),\infty} = \underset{\mathcal{G}\subset \{1,\dots,L\}, |\mathcal{G}|=m}{\text{max.}} \  \displaystyle R_\text{sum}^{(m),\infty}(\mathcal{G}).
\ee
By using the formulation \eqref{eq:opt_mode_m_2}, we can rewrite $\mathbf{P2}$ as
\be\label{eq:opt_binary_betaj}
\mathbf{P2}:  \underset{m \leq L}{\text{max.}} \  \breve{R}_\text{sum}^{(m),\infty} 
\ee
As mentioned earlier, for \eqref{eq:opt_mode_m_2} there are ${{L}\choose{m}}$ possible choices or candidates for the optimal $\mathcal{G}$. For the problem  \eqref{eq:opt_binary_betaj}, the number of candidates becomes $\sum_{i=1}^L {{L}\choose{i}}=2^L$. In the following Lemma we show that \eqref{eq:opt_mode_m_2} has the intuitively obvious solution mentioned above thereby reducing the number of candidate mode/group combinations for \eqref{eq:opt_binary_betaj} to $L$.

\begin{lemma}\label{lemma:multimode_equalbeta}
$\breve{R}_\text{sum}^{(m),\infty}$ is achieved by choosing $\mathcal{G}=\mathcal{G}^\star$ where $\mathcal{G}^\star=\{1,2,\dots,m\}$.
\end{lemma}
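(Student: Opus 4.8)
The plan is to use an exchange (swap) argument that exploits a structural feature of problem \eqref{eq:opt_mode_m}: once the mode $m$ is fixed, the total loading $\beta = m\breve{\beta}$ is the \emph{same} for every candidate set $\mathcal{G}$ with $|\mathcal{G}| = m$. Consequently $g(\beta,\rho)$ is identical across all such $\mathcal{G}$, and each $f_j(\beta,\rho)$ depends on the group only through $\gamma_j$, i.e.\ through the path-loss $a_j$. The first step is therefore to record the monotonicity already noted after Theorem~\ref{th:opt_rho}: writing $f_j(\beta,\rho) = g(\beta,\rho)\,\bigl(1 + \tfrac{\rho}{\beta}(1+g)^2\bigr)\,\tfrac{\gamma_j}{\gamma_j + (1+g)^2}$ with $g = g(\beta,\rho)$, the factor $\gamma_j/(\gamma_j + (1+g)^2)$ is increasing in $\gamma_j$ while the remaining factors are positive and independent of $j$, so $f_j(\beta,\rho)$ is increasing in $a_j$ for every fixed $\rho$. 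Hence $a_{i'} \geq a_i$ implies $f_{i'}(\beta,\rho) \geq f_i(\beta,\rho)$ for all $\rho$.

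Next I would set up the swap. Suppose $\mathcal{G}$ attains $\breve{R}_\text{sum}^{(m),\infty}$ but $\mathcal{G} \neq \{1,\dots,m\}$. Since $|\mathcal{G}| = m$, there is an index $i \in \mathcal{G}$ with $i > m$ and an index $i' \in \{1,\dots,m\}\setminus\mathcal{G}$; then $i' \leq m < i$ gives $a_{i'} \geq a_i$. Let $\mathcal{G}' = (\mathcal{G}\setminus\{i\})\cup\{i'\}$, which again has cardinality $m$ and total loading $m\breve{\beta}$. Let $(\bar{\mb{p}}^\star,\rho^\star)$ solve \eqref{eq:opt_mode_m} for $\mathcal{G}$ (a maximizer exists: by Lemma~\ref{lemma:conc_p} the objective is concave in $\bar{\mb{p}}$ on the compact feasible set for each fixed $\rho$, and the optimal $\rho$ lies in the bounded interval of Theorem~\ref{th:opt_rho}). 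I would then feed into problem \eqref{eq:opt_mode_m} for $\mathcal{G}'$ the feasible point that keeps $\rho = \rho^\star$, keeps $\bar{p}_j = \bar{p}_j^\star$ for $j \in \mathcal{G}\cap\mathcal{G}'$, and assigns $\bar{p}_{i'} = \bar{p}_i^\star$; this satisfies $\tfrac1m\sum_{j\in\mathcal{G}'}\bar{p}_j = \tfrac1m\sum_{j\in\mathcal{G}}\bar{p}_j^\star \leq 1$.

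The third step is the comparison. Because the unchanged groups $j \in \mathcal{G}\cap\mathcal{G}'$ contribute identical terms (same $\beta_j$, $\gamma_j$, $\beta$, $\rho^\star$, and hence same $f_j$), the two objective values differ only in the swapped term; since $\beta_i = \beta_{i'} = \breve{\beta}$ while $f_{i'}(\beta,\rho^\star) \geq f_i(\beta,\rho^\star)$,
\[
\breve\beta\log\bigl(1+\bar{p}_i^\star f_{i'}(\beta,\rho^\star)\bigr) \ \geq\ \breve\beta\log\bigl(1+\bar{p}_i^\star f_i(\beta,\rho^\star)\bigr).
\]
Thus the feasible point constructed for $\mathcal{G}'$ already achieves a sum rate at least $R_\text{sum}^{(m),\infty}(\mathcal{G})$, and since $R_\text{sum}^{(m),\infty}(\mathcal{G}')$ is the maximum over its own feasible set, $R_\text{sum}^{(m),\infty}(\mathcal{G}') \geq R_\text{sum}^{(m),\infty}(\mathcal{G})$. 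Finally I would iterate: each swap increases $|\mathcal{G}\cap\{1,\dots,m\}|$ by one without lowering the objective, so after at most $m$ swaps $\mathcal{G}$ becomes $\{1,\dots,m\}$, proving that $\mathcal{G}^\star = \{1,\dots,m\}$ is optimal.

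I expect the only genuinely delicate point to be the monotonicity of $f_j$ in $a_j$; everything downstream is a standard exchange argument. The feature that makes the swap clean — and which I would emphasize — is that holding $m$ fixed holds $\beta = m\breve\beta$ (and hence $g(\beta,\rho)$ and all $f_j$ of the untouched groups) fixed, so no recomputation of the common large-system quantities is needed and the comparison collapses to a single logarithmic term.
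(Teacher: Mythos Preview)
Your proposal is correct and follows essentially the same approach as the paper: both arguments exploit that fixing $m$ fixes $\beta=m\breve\beta$, so the optimal $(\bar{\mb{p}},\rho)$ for any candidate set remains feasible for $\mathcal{G}^\star=\{1,\dots,m\}$, and the monotonicity of $f_j(\beta,\rho)$ in $a_j$ then yields the comparison. The only cosmetic difference is that the paper compares $\mathcal{G}^\star$ to an arbitrary $\mathcal{S}$ in a single step via the element-wise inequality $\mathbf{a}_{\mathcal{G}^\star}\succeq\mathbf{a}_{\mathcal{S}}$ (after sorting), whereas you reach $\mathcal{G}^\star$ through iterated single-index swaps.
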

\begin{proof}
Let $\mathcal{G}^\star=\{1,2,\dots,m\}$. Also, let $\mathcal{S}\subset\{1,\dots,L\}$ with $|\mathcal{S}|=m$ such that $\mathcal{G}^\star \neq \mathcal{S}$. Moreover, the elements of $\mathcal{S}$ are arranged in an increasing order. Let $\mathbf{a}_{\mathcal{G}^\star}$ and $\mathbf{a}_{\mathcal{S}}$ be the path-loss gain vector for group combinations $\mathcal{G}^\star$ and $\mathcal{S}$, respectively. It is clear that $\mathbf{a}_{\mathcal{G}^\star} \succeq \mathbf{a}_{\mathcal{S}}$. Thus, for a fixed power and regularization parameter, it follows that
 $R_\text{sum}^{(m),\infty}(\mathcal{G}^\star) \geq R_\text{sum}^{(m),\infty}(\mathcal{S})$. Now, suppose that $\bar{\mathbf{p}}^\star_\mathcal{S}$ and $\rho^\star_\mathcal{S}$ are the optimal power allocation and regularization parameter under $\mathcal{S}$. Let us denote the corresponding limiting sum rate as $R_\text{sum}^{(m),\infty}(\mathcal{S},\rho^\star_\mathcal{S},\bar{\mathbf{p}}^\star_\mathcal{S})$. Under $\mathcal{G}^\star$, let us choose
$\bar{\mathbf{p}}_\mathcal{G}=\bar{\mathbf{p}}^\star_\mathcal{S}$ and $\rho_{\mathcal{G}^\star}=\rho^\star_\mathcal{S}$ for the power allocation and $\rho$, respectively. Even though those choices are not optimal in maximizing  $R_\text{sum}^{(m),\infty}(\mathcal{G}^\star)$, they satisfy the constraint in \eqref{eq:opt_mode_m}. Since both $\mathcal{G}^\star$ and $\mathcal{S}$ have the same allocations for power and $\rho$, then 
it follows that $R_\text{sum}^{(m),\infty}(\mathcal{G}^\star,\rho^\star_\mathcal{S},\bar{\mathbf{p}}^\star_\mathcal{S})\geq R_\text{sum}^{(m),\infty}(\mathcal{S},\rho^\star_\mathcal{S},\bar{\mathbf{p}}^\star_\mathcal{S})$. This concludes the proof.
\end{proof}
It is clear from the lemma above that we greatly reduce the complexity of $\mathbf{P2}$. Now, we only need to compare $L$ limiting sum rates, $\breve{R}_\text{sum}^{(m),\infty}$. It is also easy to see that Lemma \ref{lemma:multimode_equalbeta} also holds when $\beta_{1,\max} \geq \beta_{2,\max} \geq \dots \geq \beta_{L,\max}$. 
For a more general setup, we can relax the last constraint of $\mb{P2}$ so that  $0\leq\beta_j\leq\beta_{j,\max}$. 
This will be addressed in the following section.

\subsection{Fractional Group Loading}
In this section, we consider a fractional group loading scheme where $\beta_j$ can take values in $[0,\beta_{j,\max}]$.
This allows the BS to transmit not to all the users in the groups but some of them. In this case, $\mathbf{P2}$ becomes
\begin{equation*}
\begin{array}{ccl}
\mathbf{P3}: & \underset{\bar{\mathbf{p}}\succeq \mb{0},\pmb{\beta},\rho,\beta\geq 0}{\text{max.}} & R_\text{sum}^\infty \\
&\text{s.t.} & \dfrac{1}{\beta}\pmb{\beta}^T\bar{\mb{p}} \leq 1\\ 
& & \dfrac{1}{\beta}\pmb{\beta}^T\mb{1} = 1\\ 
& & \mb{0}\preceq \pmb{\beta}\preceq \pmb{\beta}_{\max}.
\end{array}
\label{eq:opt_prob_3}
\end{equation*}
To find the solution for $\mathbf{P3}$, we start by writing the Lagrangian of $\mathbf{P3}$ as follows:
\begin{align*}
	\mathcal{L} &= \sum_{j=1}^L \beta_j \log(1+\bar{p}_jf_j(\beta,\rho)) - \lambda\left( \frac{1}{\beta}\pmb{\beta}^T\bar{\mb{p}}-1)\right) + \pmb{\xi}^T\bar{\mb{p}}   \\
	& \quad   + \mu\left(\frac{1}{\beta}\pmb{\beta}^T\mb{1} -1\right) +  \pmb{\nu}\pmb{\beta} -  \pmb{\eta}^T(\pmb{\beta}-\pmb{\beta}_{\max}) + \kappa \rho + \eta \beta,
\end{align*}
where $\lambda, \kappa, \mu, \eta, \pmb{\xi}, \pmb{\nu},\pmb{\eta}$ are the Lagrange multipliers for the constraints of $\mathbf{P3}$. Let $\bar{\mb{p}}^\star,\pmb{\beta}^\star,\rho^\star,\beta^\star$ be the (candidate) solutions for $\mb{P3}$. The KKT necessary optimality conditions are 
\begin{align}
\frac{\partial \mathcal{L}}{\partial \rho} &= \sum_{j=1}^L \frac{\beta_j^\star \bar{p}_j^\star}{1+\bar{p}_j^\star f_j(\beta^\star,\rho^\star)}\frac{\partial f_j(\beta^\star,\rho^\star)}{\partial \rho} +\kappa = 0 \label{eq:stat_rho_power} \\
\frac{\partial \mathcal{L}}{\partial \bar{p}_j} &= \beta_j^\star\left(\frac{f_j(\beta^\star,\rho^\star)}{1+\bar{p}_j^\star f_j(\beta^\star,\rho^\star)}-\lambda\right) + \xi_j = 0 \label{eq:stat_pj_power}\\
\frac{\partial\mathcal{L}}{\partial \beta_j} &= \log(1+\bar{p}_j^\star f_j(\beta^\star,\rho^\star)) - \lambda(\bar{p}_j^\star-1) \notag\\
&\qquad\qquad\qquad\qquad\qquad\quad + \nu_j -\eta_j + \mu =0 \label{eq:stat_betaj_power}\\
\frac{\partial \mathcal{L}}{\partial \beta} &= \sum_{j=1}^L \frac{\beta_j^\star \bar{p}_j^\star}{1+\bar{p}_j^\star f_j(\beta^\star,\rho^\star)}\frac{\partial f_j(\beta^\star,\rho^\star)}{\partial \beta} - \mu + \eta = 0 \label{eq:stat_beta_power}
\end{align} 
and
\begin{gather*}
	\lambda\left(\pmb{\beta}^{\star T}\bar{\mb{p}}^\star - \beta^\star\right) = 0, \pmb{\beta}^{\star T}\bar{\mb{p}}^\star - \beta^\star\leq 0,  
		\bar{\mb{p}}^\star \succeq \mb{0},\xi_j\bar{p}_j^\star=0,  \\
	 \nu_j\beta_j^\star=0, \eta_j(\beta_j^\star-\beta_{j,\max}) = 0,\\
	  \mb{0}\preceq \pmb{\beta}^\star \preceq \pmb{\beta}_{\max}, \pmb{\beta}^{\star T}\mb{1}-\beta^\star=0,  \\
	\beta^\star \geq 0, \eta \beta^\star=0,  \rho^\star\geq 0, \kappa \rho^\star=0, \\
	[\lambda\ \kappa\  \eta]^T \succeq 0, \pmb{\xi} \succeq  0,   \pmb{\nu}\succeq 0, \pmb{\eta} \succeq 0,
\end{gather*}
for all $j=1,\dots,L$. 

Let us consider the stationarity condition \eqref{eq:stat_rho_power}. In solving $\mathbf{P1}$, we have shown that $f_j(\beta,\rho)$ is increasing in $\rho$ up to $\rho=\beta/\gamma_j$ and then decreasing. Thus, the optimal $\rho$ can not be zero ($\kappa=0$) and at the optimum,
\be\label{eq:opt_rho_power_P3}
\sum_{j=1}^m \frac{\beta_j^\star \bar{p}_j^\star}{1+\bar{p}_j^\star f_j(\beta^\star,\rho^\star)}\frac{\partial f_j(\beta^\star,\rho^\star)}{\partial \rho}=0. 
\ee
Looking at \eqref{eq:stat_pj_power}, one can see that when $\bar{p}_j > 0$ ($\xi_j=0$), it satisfies
\[
\bar{p}_j^\star = \left[\frac{1}{\lambda} - \frac{1}{f_j(\beta^\star,\rho^\star)}\right]_+
\]
which has a similar form to the solution for $\mathbf{P1}$. Since $a_1\geq\ \dots\ \geq a_L $, then $\bar{p}_1^\star\geq\dots\geq\bar{p}_L^\star$. At the optimum, the following holds
\[
\sum_{j=1}^L\beta_j^\star\left(\left[\frac{1}{\lambda} - \frac{1}{f_j(\beta^\star,\rho^\star)}\right]_+-1\right)=0
\]
and it can be used to determine $\lambda$.
	
Exploring the stationary condition \eqref{eq:stat_betaj_power} will lead us to the following result.
\renewcommand{\labelenumi}{(\roman{enumi})} 
\begin{lemma}\label{lemma:betaj_alloc}
	The optimal $\{\beta_j\}$ allocation is such that
	\begin{enumerate}
	\item the first $M$ groups, for some $M\leq L$, will be allocated non-zero power,
	\item $\beta_1^\star,\beta_2^\star,\dots,\beta_{M-1}^\star$ are all at the maximum possible values,
	\item $0\leq\beta_M^\star\leq \beta_{M,\max}$,
	\item the remaining groups are allocated zero power.
	\end{enumerate}	
\end{lemma}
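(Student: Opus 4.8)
The plan is to read the structure straight off the stationarity condition \eqref{eq:stat_betaj_power} together with the complementary-slackness relations $\nu_j\beta_j^\star=0$ and $\eta_j(\beta_j^\star-\beta_{j,\max})=0$. Introducing the \emph{marginal value of loading} group $j$,
\[
\Phi_j := \log\!\left(1+\bar{p}_j^\star f_j(\beta^\star,\rho^\star)\right) - \lambda\!\left(\bar{p}_j^\star-1\right) + \mu ,
\]
condition \eqref{eq:stat_betaj_power} reads $\Phi_j=\eta_j-\nu_j$. Because $\nu_j,\eta_j\geq 0$ and at most one of the two can be positive (they require $\beta_j^\star=0$ and $\beta_j^\star=\beta_{j,\max}$ respectively), this gives a clean threshold rule: $\beta_j^\star=\beta_{j,\max}$ forces $\Phi_j\geq 0$, an interior value $0<\beta_j^\star<\beta_{j,\max}$ forces $\Phi_j=0$, and $\beta_j^\star=0$ forces $\Phi_j\leq 0$. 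The whole statement therefore reduces to showing that $\Phi_j$ is non-increasing in the group index $j$.

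To establish monotonicity I would first argue that every \emph{served} group (one with $\beta_j^\star>0$) must carry positive power. Indeed, if $\beta_j^\star>0$ but $\bar{p}_j^\star=0$, this group contributes nothing to the objective yet inflates the total loading $\beta^\star=\pmb{\beta}^{\star T}\mb{1}$; since $f_k(\beta,\rho)$ is decreasing in $\beta$, lowering $\beta_j^\star$ to zero strictly raises every other $f_k$ and hence the sum rate, contradicting optimality. Hence each served group obeys the water-filling expression $\bar{p}_j^\star=1/\lambda-1/f_j$ from \eqref{eq:stat_pj_power}, for which $1+\bar{p}_j^\star f_j=f_j/\lambda$ and
\[
\Phi_j = \log f_j - \log\lambda - 1 + \frac{\lambda}{f_j} + \lambda + \mu .
\]
Differentiating, $\mathrm{d}\Phi_j/\mathrm{d}f_j=(f_j-\lambda)/f_j^2>0$ because $f_j>\lambda$ on served groups, so $\Phi_j$ is strictly increasing in $f_j$. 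As noted after Theorem \ref{Th:power_alloc}, $f_j$ is increasing in $a_j$, and the groups are indexed so that $a_1\geq a_2\geq\dots\geq a_L$; hence $\Phi_j$ is strictly decreasing in $j$ across the served groups.

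Combining the threshold rule with this monotonicity yields exactly the staircase in the statement: the served groups form a prefix $\{1,\dots,M\}$; those with $\Phi_j>0$ (the first $M-1$) sit at their maximum loading $\beta_{j,\max}$, giving (ii); the single boundary group $M$ with $\Phi_M=0$ takes an interior value $0\leq\beta_M^\star\leq\beta_{M,\max}$, giving (iii); and all later groups, having smaller $f_j$ and thus $\Phi_j<0$, are switched off. The positive-power conclusion for the first $M$ groups and the zero-power conclusion for the remainder then give (i) and (iv).

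The step I expect to be the main obstacle is making the \emph{prefix} conclusion watertight: one must rule out a configuration in which a weaker served group coexists with a stronger switched-off group, and must show the loading boundary coincides with the power boundary rather than leaving a gap of positive-power, zero-loading groups. The cleanest way to close this is an exchange argument — shifting loading from a weaker to a stronger admissible group while holding $\beta^\star$ (and hence all the $f_k$) fixed can only increase the weighted sum rate — which simultaneously pins the served set to an initial segment and guarantees a single interior group (barring ties in the $a_j$, where several boundary groups may share $\Phi_j=0$ and can be merged without loss). The monotonicity $f_k$ decreasing in $\beta$, used in the power argument above, reduces to checking that the fixed-point solution $g(\beta,\rho)$ is decreasing in $\beta$, a one-line implicit differentiation of its defining equation.
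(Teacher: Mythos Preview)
Your $\Phi_j$-monotonicity idea is a clean alternative to the paper's two-group exchange argument for parts (ii)--(iii), but the proposal has a real gap in the preliminary step you need: showing that every group with $\beta_j^\star>0$ carries positive power (equivalently, part (iv)). The improvement argument you give does not work as stated. Lowering $\beta_j^\star$ to zero for an idle group shrinks $\beta^\star=\sum_k\beta_k^\star$, but the power constraint $\sum_k\beta_k^\star\bar p_k^\star\le\beta^\star$ is active at the optimum ($\lambda>0$); since the idle group contributed nothing to the left-hand side, the perturbed point is \emph{infeasible} unless you also scale down the remaining $\bar p_k$, and then ``the sum rate goes up'' is no longer automatic. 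Your claim that ``$f_k$ decreasing in $\beta$ reduces to $g$ decreasing in $\beta$'' is also not right: $f_k$ depends on $\beta$ both through $g(\beta,\rho)$ and through the explicit factor $\rho/\beta$, and the paper's formula for $\partial f_k/\partial\beta$ contains a term whose sign flips with $\rho\gamma_k/\beta-1$, so the monotonicity is not a one-line consequence of $\partial g/\partial\beta<0$.

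The paper establishes (iv) through the KKT system as you intend, but the key missing ingredient is an \emph{explicit evaluation of the multiplier $\mu$}. By rewriting $\partial f_k/\partial\beta$ as a constant-times-$f_k$ term plus a multiple of $\partial f_k/\partial\rho$, and then invoking the stationarity condition in $\rho$ to kill the second piece, the $\beta$-stationarity condition collapses to
\[
\mu \;=\; -\lambda\!\left(1+\frac{g}{1+\tfrac{\rho}{\beta}(1+g)^2}\right)\;<\;-\lambda .
\]
In your notation this says $\Phi_j=\lambda+\mu<0$ whenever $\bar p_j^\star=0$, which forces $\nu_j>0$ and hence $\beta_j^\star=0$. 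With this in hand your monotonicity route to (i)--(iii) goes through; without it, neither the threshold rule nor the exchange you sketch pins down the prefix, because a zero-loading group has $\bar p_j$ undetermined and the $\beta_j$-stationarity condition can be satisfied vacuously. The paper's own argument for (i)--(iii) is a longer direct exchange (a sub-optimization over two groups' loadings and powers, analyzed case-by-case), so your $\Phi_j$ computation is a genuine simplification once (iv) is secured by the $\mu$ formula above.
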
 
\begin{proof}
See Appendix \ref{proofbetaj_alloc}.
\end{proof}

 We should note that in the lemma above, we do not know the optimal value of $M$ maximizing the limiting sum rate since there are several values of $M$ that satisfy the lemma. Let $R_\text{sum}^{(i),\infty}$ be the achieved limiting sum rate with $M=i$. Let $\mathcal{M}=\{1,2,\dots,L\}$ be the set of possible values for $M$. Then, the optimal $M$ is given by
\be\label{eq:opt_mode_varbeta}
M^\star=\underset{i \in \mathcal{M}}{\arg \max} \ R_\text{sum}^{(i),\infty}\ .
\ee  
\noi We should note that in evaluating $R_\text{sum}^{(i),\infty}$, we use $\{\beta_j^\star\}$ allocation scheme in Lemma \ref{lemma:betaj_alloc}, $\beta^\star=\sum_{j=1}^i \beta_j^\star$ and also the stationary conditions  in \eqref{eq:stat_rho_power} and \eqref{eq:stat_pj_power} to determine $\rho^\star$ and $\bar{\mb{p}}^\star$ respectively. The value for $\beta_M^\star$ must satisfy \eqref{eq:stat_betaj_power} with $\nu_M=0$ and $\eta_M=0$, i.e.,
\be\label{eq:betaM}
\log(1+\bar{p}_M^\star f_M(\beta^\star,\rho^\star)) - \lambda(\bar{p}_M^\star-1) + \mu =0,
\ee
where $\mu$ is given by \eqref{eq:mu_at_opt}. Thus, solving \eqref{eq:opt_mode_varbeta} correspondingly solves $\mathbf{P3}$. The steps in solving it are presented in Algorithm \ref{alg:solve_P3}. 
\begin{algorithm*}[t]
\caption{\quad Algorithm for Solving $\mathbf{P3}$ } \label{alg:solve_P3}
\begin{algorithmic}[1] 
\State $\mathcal{M}=\{\}$ \Comment{Contain possible values for $M$} 
\For{$j=1$ \textbf{to} $L$}
			\State $\beta_i^\star=\beta_{i,\max}, \forall\ i=1,\dots,j$     \Comment{Assume that $\beta_j^\star=\beta_{j,\max}$}
			\State $\lambda, \rho^\star, [\bar{p}_1^\star\ \dots \bar{p}^\star_j]^T \gets $ Solving $\mathbf{P1}$ with $\beta^\star= \sum_{i}^j \beta_i^\star$
			\State Determine $M$ s.t. $\bar{p}_{M}^\star>0$ and  $\bar{p}_{M+1}^\star = \dots = \bar{p}_j^\star =0$ \Comment{$M\geq 1$}
			\If{$M \in \mathcal{M}$}
					\State \textbf{continue} \Comment{Skip the remaining steps and go to the next iteration (Step 2)}
			\EndIf
			\State $\mathcal{M} \gets M$
			\State $\beta_{M + 1}^\star=\dots=\beta_j^\star=0$	
			\State Compute $\mu$ according to \eqref{eq:mu_at_opt}
			\State $\eta_M = \log(1+\bar{p}_M^\star f_M(\beta^\star,\rho^\star)) - \lambda(\bar{p}_M^\star-1) +  \mu$
			\If{$\eta_M < 0$}
					\State $\beta_M^\star \in [0,\beta_{M,\max}] \gets $ Solving $\mathbf{P1}$ and \eqref{eq:betaM} with $\beta^\star=\sum_{i}^{M-1} \beta_{i,\max} + \beta_M^\star$
			\EndIf
			\State Compute $R_\text{sum}^{(M),\infty}$ with the updated $\beta$ and $\{\beta_j\}$
\EndFor
\State $M^\star \gets $  Solving \eqref{eq:opt_mode_varbeta} 
\end{algorithmic}
\end{algorithm*}

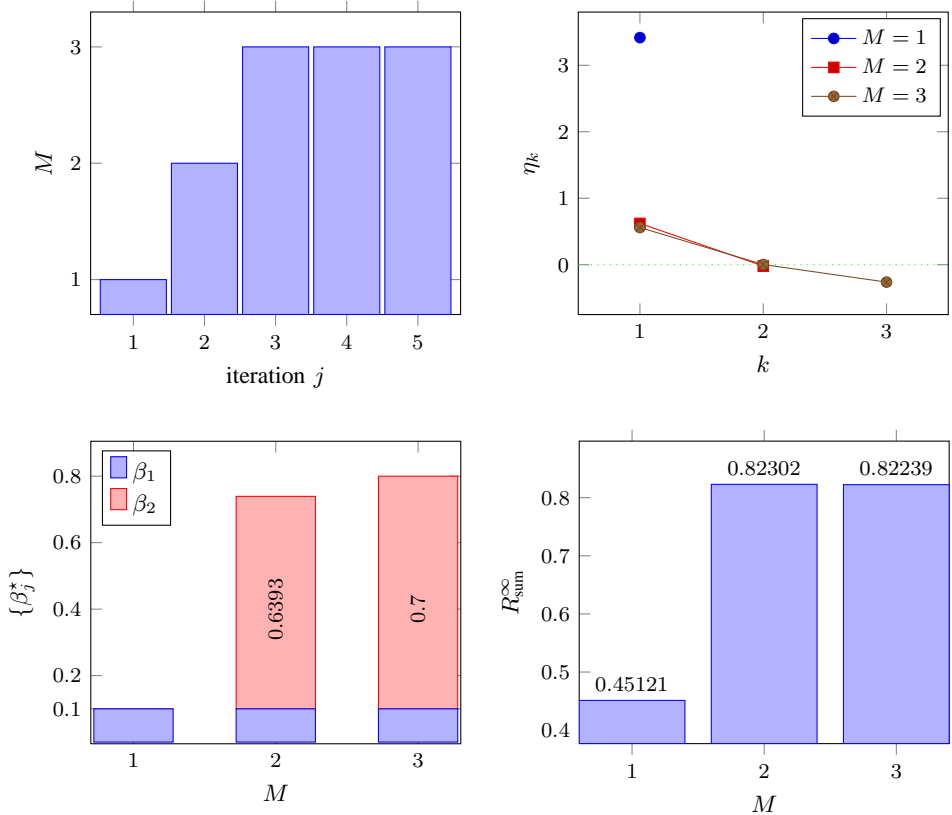
\begin{figure*}[!tbh]
\vspace{0.5cm}
\centering
\begin{tabular}{rl}
\begin{tikzpicture}[baseline,trim axis left]
	\begin{axis}[compat=newest,small, enlargelimits=0.15, ybar=0pt, bar width=25, xlabel={iteration $j$}, ylabel={$M$}, ytick=data
	]
	\addplot coordinates {(1,1) (2,2) (3,3) (4,3) (5,3)};
	\end{axis}
\end{tikzpicture}
&\hspace{0.18cm}
\begin{tikzpicture}[baseline, trim axis right]
	\begin{axis}[compat=newest, small, enlargelimits=false, xlabel={$k$}, ylabel={$\eta_k$}, xtick={1,2,3}, ymin=-0.75, ymax=3.8
	]
	\addplot coordinates {(1,3.4158)};
	\addplot coordinates {(1,0.6207) (2,-0.0194)};
	\addplot coordinates {(1,0.5591) (2,0.0028) (3,-0.2614)};
	\addplot[color=green,dotted] coordinates {(0.5,0)  (3.5,0)};
	\legend{\footnotesize $M=1$,\footnotesize $M=2$,\footnotesize $M=3$}
	\end{axis}
\end{tikzpicture}\\[40pt]
\begin{tikzpicture}[baseline, trim axis left]
	\begin{axis}[compat=newest, small, enlargelimits=0.15, ybar stacked, bar width=30, xlabel={$M$}, ylabel={$\{\beta_j^\star\}$}, xtick=data, ytick={0.1,0.2,0.4,0.6,0.8},
	legend pos=north west
	]
	\node[rotate=90] at (axis cs:2,0.4) {\footnotesize $0.6393$};
	\node[rotate=90] at (axis cs:3,0.4) {\footnotesize $0.7$};
	\addplot coordinates {(1,0.1) (2,0.1) (3,0.1)};
	\addplot coordinates {(1,0) (2,0.6393) (3,0.7)};
	\legend{\small $\beta_1$, \small $\beta_2$};
	\end{axis}
	
\end{tikzpicture}
& 
\begin{tikzpicture}[baseline, trim axis right]
	\begin{axis}[compat=newest, small, enlargelimits=0.2,  ybar=0pt!, bar width=40, xlabel={$M$}, ylabel={$R_\text{sum}^\infty$}, xtick=data,
	]
	\addplot coordinates {(1,0.45121) (2,0.82302) (3,0.82239)};
	\node[above] at (axis cs:1,0.45121) {\footnotesize $0.45121$};
	\node[above] at (axis cs:2,0.82302) {\footnotesize $0.82302$};
	\node[above] at (axis cs:3,0.82302) {\footnotesize $0.82239$};
	\end{axis}
\end{tikzpicture}
\end{tabular}
\caption{Algorithm \ref{alg:solve_P3} implementation for $L=5$, $\pmb{\beta}_{\max}=[0.1\ 0.7\ 0.1\ 0.05\ 0.05 ]^T$,  $a_j^2=1/j^2$ and $P_d/\sigma^2 = 10$ dB. }
\label{fig:multimode_var_beta}
\end{figure*}

 We have $L$ iterations when in a particular iteration, say iteration $j$, the first $j$ groups are considered. Assuming those group to have their group loading at the maximum value, the corresponding optimal power allocation (i.e., solving $\mathbf{P1}$) is computed. Then, the value of $M \leq j$ for that iteration can be determined by using the fact that $\bar{p}_{M+1}^\star=0$. We should note that different $j$s may give the same $M$ and hence, we need only to consider one of them. After having $M$, we can set $\beta_{M+1}^\star=\dots=\beta_j^\star=0$. To determine the optimal value for $\beta_M^\star$, we need to compute $\eta_M$.
If $\eta_M > 0$, $\beta_M^\star=\beta_{M,\max}$ (we already set this in the first step). Otherwise, $0 \leq \beta_M^\star \leq \beta_{M,\max}$. In the latter case, we need to solve $\mathbf{P1}$ and \eqref{eq:betaM} simultaneously. Then, we can update the value for $\{\beta_j^\star\}_{j=1}^M$ and $\beta^\star$ and also compute the corresponding limiting sum rate. In the final steps, we compare the limiting sum rates for different $M$ and the maximum is the solution of $\mathbf{P3}$.


Figure \ref{fig:multimode_var_beta} illustrates the implementation results of algorithm \ref{alg:solve_P3} for the case:
$L=5$, $a_j^2=1/j^2,\ j=1,\dots,L$, $\pmb{\beta}_{\max}=[0.1\ 0.7\ 0.1\ 0.05\ 0.05 ]^T$ where the $j$-th element corresponds to $\beta_{j,\max}$ and $P_d/\sigma^2 = 10$ dB. From the (upper-left) plot, we can see that we only have three possible values for $M$, i.e., $\mathcal{M}=\{1,2,3\}$. For $M=1$, we have a positive $\eta_M$ while for $M=2$ and $M=3$, $\eta_M$ is negative. We should note that for $M=3$, $\eta_2$ is slightly above zero (0.0028). Executing step 14 in the algorithm \ref{alg:solve_P3} yields $\beta_2^\star=0.6393$ and $\beta_3^\star=0$ for $M=2$ and $M=3$, respectively. Even though $M=2$ and $M=3$ have the same two groups with positive group loading, they have different total group loadings, i.e., $0.7393$ and $0.8$, respectively and consequently different sum rates. The last plot in the bottom-right shows that the maximum limiting sum rate is achieve when $M=2$. To validate the result from Algorithm \ref{alg:solve_P3}, we perform a grid search where $\beta$ takes values between 0 and 1 with 0.001 increment. For each value of $\beta$, the corresponding limiting sum rate is computed. The results are plotted in Figure \ref{fig:Rsum_grid_analytic}. The plot shows that the maximum limiting sum rates and the optimal $\beta$ obtained from the grid search and Algorithm \ref{alg:solve_P3} are identical. This confirms our theoretical analysis and the proposed algorithm. We should note that even though the line around the optimal $\beta$ looks flat, closer inspection of the numerical values of the limiting sum rates in that region reveals that the limiting sum rate is actually increasing until reaching the optimal $\beta$ and then decreasing. 

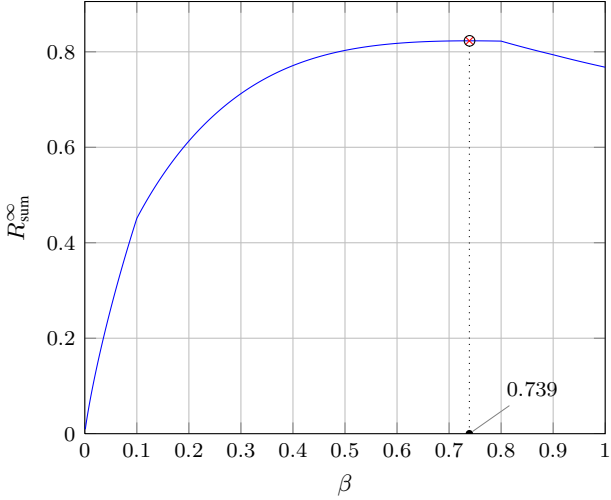
\begin{figure}[t]
\vspace{0.2cm}
\centering
\begin{tikzpicture}
	\tikzset{
		small dot/.style={fill=black,circle,scale=0.3}
	}
	\begin{axis}[compat=newest, small,
		xmin=0, xmax=1, ymin=0,
		width=8.5 cm,
		xlabel={$\beta$},
		ylabel={$R_\text{sum}^\infty$},
		grid=major
	]
	\addplot[blue,-] table {Rsum_exhaustive.dat};
	\addplot[color=red, mark=x] coordinates {(0.739,0.8230)};
	\addplot[color=black, mark=o] coordinates {(0.7393,0.8230)};
	\addplot[dotted,-] coordinates {(0.739,0) 
	(0.739, 0.8230)
	};
	\node[small dot,pin=45:{\footnotesize $0.739$}] at (axis description cs:0.739,0) {};
	\end{axis}
\end{tikzpicture}
\caption{The maximum limiting sum rate obtained from the grid search ($\textcolor[rgb]{1,0,0}{\times}$) and Algorithm \ref{alg:solve_P3} ($\circ$).}
\label{fig:Rsum_grid_analytic}
\end{figure}

We can also observe from the results of Algorithm \ref{alg:solve_P3} in Figure \ref{fig:multimode_var_beta} that  we can stop the iterations once an iteration for which
$\eta_M < 0$ (and $\beta_M^\star \in [0, \beta_{M,\max}]$) is reached. Intuitively, if this occurs, only part of the last group receives non-zero power, i.e. is served and it seems unlikely that adding groups with even weaker channels will change that and lead to an increase in the sum rate. 
This can be justified by realizing that 
the limiting sum rate obtained by increasing $\beta_M^\star$ by, say $\beta_\delta$, will be greater or equal to that obtained by adding one more group with group loading $\beta_\delta$. Moreover, increasing $\beta_{M}^\star$ still gives a negative $\eta_M$ which does not satisfy the KKT necessary condition $(\eta_M \geq 0)$.  
Thus,  we can modify Algorithm \ref{alg:solve_P3} by adding a '\textbf{break}' instruction after line $14$. That will stop the iteration and jump directly to line 18.  This will reduce the number of iterations and computations.

\section{Conclusion}\label{s:conclusion_power}
In this paper, we have investigated  problems related to determining the optimal power allocation, regularization parameter and group loadings of a finite number of groups of users so as to maximize the sum rate of MISO broadcast channels with RCI precoder. Even though the analysis was performed in the large system limit, our numerical simulations show its validity for  finite-size system designs. Considering the power allocation problem only, we show that the optimal strategy follows the water-filling scheme. For some cases considered in this paper we show that it is optimal for the BS to communicate to some groups having best channels (highest path-loss gains). We also provide the KKT necessary conditions and propose an algorithm for the optimal group-loading allocation when the BS is allowed to transmit to only subsets of the users in the groups.

\appendices
\section{Proof of Theorem \ref{th:lim_SINR}}\label{proofLSA}
Here, we present the proof briefly since we repeat the same techniques as we have used in deriving the results in \cite{RusdhaPrep, Muharar_asil11,Nguyen_globecom08}. 
An alternative proof of this result is given in \cite[App. II]{Wagner_it12}, the proof corresponding to Theorem 2.
Let $A_k=\frac{1}{N}\mb{h}_k\left(\frac{1}{N}\mb{H}^H_k\mb{H}_k + \rho\mathbf{I}_N\right)^{-1}\mb{h}^H_k=\frac{1}{N}\mb{h}_k\mb{M}_k\mb{h}^H_k$, where $\rho=\alpha/N$, $\mb{M}_k=\left(\frac{1}{N}\mb{H}^H_k\mb{H}_k + \rho\mathbf{I}_N\right)^{-1}$ and $\mb{H}_k$ is $\mb{H}$ with the $k$-th row removed.
Then, by employing the matrix inversion lemma (MIL), $\mb{h}_k(\mb{H}^H\mb{H} + \alpha\mathbf{I}_N)^{-1}\mb{h}^H_k$ in the numerator of \eqref{eq:finite_sinr_power} can be written as $\frac{A_k}{1+A_k}$. 
By using the results \cite[Lemma 1]{Evans_it00} or \cite[Lemma 5.1]{Liang_it07}, it follows that $A_k - \frac{1}{N}\Trf{\mb{M}_k} \as 0$. By using the rank-1 perturbation lemma (R1PL), see e.g. \cite[Theorem 3.9, Lemma 14.3]{Couillet_book11},   $\frac{1}{N}\Trf{\mb{M}_k}$ converges almost surely to $\frac{1}{N}\Trf{\mb{M}}$ with $\mb{M}=(\mb{H}^H\mb{H} + \alpha\mathbf{I}_N)^{-1}$. We can also show that $\frac{1}{N}\Trf{\mb{M}} \as g(\beta,\rho)$ (see the results in e.g. \cite[Theorem 7]{Evans_it00}) where $g(\beta,\rho)$ is the solution of $g(\beta,\rho)=\left(\rho+\frac{\beta}{1+g(\beta,\rho)}\right)^{-1}$. Thus, $A_k - g(\beta,\rho) \as 0$.
Now considering the denominator, we can write $|\mathbf{h}_k(\mathbf{H}^H\mathbf{H}+\alpha \mathbf{I}_N)^{-1}\mathbf{h}^H_j|^2$  as
$\frac{\frac{1}{N}}{(1+A_k)^2}I_j$ where $I_j=\frac{1}{N}\mathbf{h}_k\mb{M}_k\mathbf{h}^H_j\mathbf{h}_j\mb{M}_k\mathbf{h}^H_k$. By \cite[Lemma 5.1]{Liang_it07}, we have 
\[
\max_{j\leq K}\left|I_j-\frac{1}{N}\Trf{\mb{M}_k\mathbf{h}^H_j\mathbf{h}_j\mb{M}_k}\right| \as 0.
\]
The matrix inside the trace has rank one. Thus, the second term on the RHS becomes 
\[
\frac{1}{N}\mb{h}_j\mb{M}_k^{2}\mb{h}_j^H = \frac{1}{(1+A_{j,kj})^2} \frac{1}{N}\mb{h}_j\mb{M}_{kj}^{2}\mb{h}_{j}^H,
\]
where the RHS is obtained by the MIL, $\mb{M}_{kj}=(\frac{1}{N}\mathbf{H}^H_{kj}\mathbf{H}_{kj}+\rho \mathbf{I}_N)^{-1}$, $A_{j,kj}=\frac{1}{N}\mb{h}_j\mb{M}_{kj}\mb{h}_{j}^H$, $\mb{H}_{kj}$ is $\mb{H}_k$ with row $j$ removed. Then, it follows $\max_{j\leq K }\left|\frac{1}{N}\mb{h}_j\mb{M}_{kj}^{2}\mb{h}_j^H-\frac{1}{N}\Trf{\mb{M}_{kj}^{2}}\right|\as 0$. We can show that the second term on the LHS is equal to $-\frac{\partial}{\partial \rho}\frac{1}{N}\Trf{\mb{M}_{kj}}$. 
We also have that $\max_{j\leq K}|A_{j,kj} - \frac{1}{N}\Trf{\mb{M}_{kj}}|\as 0 $ .  By applying R1PL twice,  $\frac{1}{N}\Trf{\mb{M}_{kj}}\as g(\beta,\rho)$.
Suppose that ${\cal P}=\lim_{K\to \infty}\frac{1}{K}\sum_{j=1}^K p_j$ exists and is bounded. Note that it can be interpreted as the empirical mean of the users' power or just average power. Thus, by combining the large system results, we obtain
\be\label{eq:lsa_int_power}
\sum_{j\neq k} p_j |\mathbf{h}_k(\mathbf{H}^H\mathbf{H}+\alpha \mathbf{I}_N)^{-1}\mathbf{h}^H_j|^2 + \beta\mathcal{P} \frac{\frac{\partial g(\beta,\rho)}{\partial \rho}}{(1+g(\beta,\rho))^4} \as 0.
\ee
By following the same steps as in obtaining \eqref{eq:lsa_int_power}, we can establish that
\[
c^2 + \frac{P_d(1+g(\beta,\rho))^2}{\beta {\cal P} \frac{\partial}{\partial \rho}g(\beta,\rho)} \as 0.
\]
Hence, by using this last result, we can conclude that the signal and interference energy converges almost surely to $-P_dp_k a_k^2  g^2(\beta,\rho)\left(\beta\mathcal{P}\frac{\partial}{\partial \rho}g(\beta,\rho)\right)^{-1}$ and $P_d a_k^2(1+g(\beta,\rho))^{-2}$, respectively. 
Recalling the definitions of $\gamma_k$ and $\bar{p}_k$ in the statement of Theorem \ref{th:lim_SINR}, and using the fact that 
\be \label{eq:der_g}
\frac{\partial}{\partial \rho}g(\beta,\rho)=-\frac{g(\beta,\rho)(1+g(\beta,\rho))^2}{\beta+\rho(1+g(\beta,\rho))^2},
\ee 
\eqref{eq:lim_SINR_power} follows immediately. This concludes the proof.

\section{Proof of Theorem \ref{th:opt_rho}}\label{app:opt_rho}
Related to the KKT (stationary) conditions \eqref{eq:stat_rho_power}, for a given $\bar{\mb{p}}$, we have
\[
\frac{\partial R_\text{sum}^\infty}{\partial \rho}=\sum_{j=1}^L \frac{\beta_j \bar{p}_j}{1+\bar{p}_jf_j(\beta,\rho)}\frac{\partial f_j(\beta,\rho)}{\partial \rho},
\]
where
\begin{align*}
\frac{\partial f_j(\beta,\rho)}{\partial \rho}&=\frac{\gamma_j^2}{[\gamma_j+(1+g)^2]^2}2g(1+g)\left(\frac{\rho}{\beta}-\frac{1}{\gamma_j}\right)\frac{\partial g}{\partial \rho} \\
&=f_j^2(\beta,\rho)\frac{2\left(\frac{1}{g}+1\right)}{[1+\frac{\rho}{\beta}(1+g)^2]^2}\left(\frac{\rho}{\beta}-\frac{1}{\gamma_j}\right)\frac{\partial g}{\partial \rho},
\end{align*}
and $g$ represents $g(\beta,\rho)$. Thus, 
\[
\frac{\partial R_\text{sum}^\infty}{\partial \rho}=\frac{2\left(\frac{1}{g}+1\right)}{[1+\frac{\rho}{\beta}(1+g)^2]^2}\sum_{j=1}^L \frac{\beta_j \bar{p}_j f_j^2(\beta,\rho)}{1+\bar{p}_j f_j(\beta,\rho)}\left(\frac{\rho}{\beta}-\frac{1}{\gamma_j}\right)\frac{\partial g}{\partial \rho}.
\]
Recall that $\frac{\partial g}{\partial \rho} < 0$ (see \eqref{eq:der_g}). Let $q_j=\frac{\rho}{\beta}-\frac{1}{\gamma_j}$. It is also obvious that  $q_j$ is decreasing in $j$. Thus, for $q_L > 0$, $\frac{\partial R_\text{sum}^\infty}{\partial \rho}$ is negative. This implies that $\frac{\partial R_\text{sum}^\infty}{\partial \rho}$ can not be zero for $\rho > \frac{\beta}{\gamma_L}$. For $q_1 < 0$,
$\frac{\partial R_\text{sum}^\infty}{\partial \rho}$ is positive and consequently, can not be zero for $\rho < \frac{\beta}{\gamma_1}$. Therefore, the optimal $\rho$ must be in the interval of
\[
\frac{\beta}{\gamma_1} \leq \rho^\star \leq \frac{\beta}{\gamma_L}. 
\]
as in \eqref{eq:interval_opt_rho_power}. When we only have one group then $\rho^*$ is the same as the one obtained in \cite{RusdhaPrep,Nguyen_globecom08}. We can also remove the boundary point $\rho=0 < \frac{\beta}{\gamma_1}$ (related to the case $\kappa > 0$, that is, the constraint $\rho\geq 0$ is inactive) since as previously discussed, $\frac{\partial R_\text{sum}^\infty}{\partial \rho} > 0$ at that point. Thus, from \eqref{eq:stat_rho_power_P1} with $\kappa=0$ or by evaluating  $\frac{\partial R_\text{sum}^\infty}{\partial \rho} = 0$, $\rho^\star$ must satisfy \eqref{eq:opt_rho1_power} at $\bar{\mb{p}}=\bar{\mb{p}}^\star$.

\section{Proof of Lemma \ref{lemma:betaj_alloc}}\label{proofbetaj_alloc}
In the first part, we will prove part (i) - (iii) of the lemma. We show those by considering any two groups $l$ and $j$ such that $l<j$, such that the current allocation has $\beta_j>0$ and $\bar{p}_j>0$ and proving that we can improve performance by having $\beta_l$ at its maximum possible value. Let us assume an assignment $(\beta_l,\bar{p}_l)$ and $(\beta_j,\bar{p}_j)$ such that $\beta_l \leq \beta_{l,\max}$ and $\beta_j \leq \beta_{j,\max}$. In that case, the combined group loading is $\beta_l+\beta_j$. Now, let $x_l$ be the new group loading allocation for group $l$ and $y_l$ be the corresponding assigned power. In the following we will show that the optimal $x_l$ maximizing the sum rate of of users in group $j$ and $l$ is $\beta_{l,\max}$ by solving the following optimization problem
\begin{equation*}
\begin{array}{ccl}
 & \underset{x_l,y_l,y_j}{\text{max.}}  & x_l\log(1+y_lf_l(\beta,\rho)) \\
 & & \qquad\qquad + (\beta_l+\beta_j-x_l)\log(1+y_jf_j(\beta,\rho)) \\
&\text{s.t.} & \max(0,\beta_l+\beta_j-\beta_{j,\max}) \leq x_l,    \\
& & \min(\beta_l+\beta_j,\beta_{l,\max}) \geq x_l, \\
& & y_l x_l + y_j (\beta_l+\beta_j-x_l) \leq \beta_l\bar{p}_l + \beta_j\bar{p}_j,  \\
& & y_l \geq 0, y_j\geq 0.
\end{array}
\end{equation*}
The Lagrangian is given by
\begin{align*}
\mathcal{L}&=x_l\log(1+y_lf_l(\beta,\rho))+(\beta_l+\beta_j-x_l)\log(1+y_jf_j(\beta,\rho)) \\
&\qquad + \mu_{x_l}\left(x_l-\max(0,\beta_l+\beta_j-\beta_{j,\max})\right) \\
&\qquad + \nu_{x_l}\left(\min(\beta_l+\beta_j,\beta_{l,\max}) -x_l \right) \\
&\qquad + \lambda\left(\beta_l\bar{p}_l + \beta_j\bar{p}_j-y_l x_l - y_j (\beta_l+\beta_j-x_l)\right) \\
&\qquad + \mu_{y_l}y_l +\mu_{y_j}y_j,
\end{align*}
where $\mu_{x_l}, \nu_{x_l}, \mu_{y_l}, \mu_{y_j}, \lambda$  are the Lagrange multipliers associated to the constraints on $x_l, y_l, y_j$ and the second constraint, respectively. 
The stationary conditions for the solution candidates are then given by\footnote{Here, we do not use superscript $^\star$ for the solution candidates}
\begin{align}
\frac{\partial \mathcal{L}}{\partial x_l} &= \log(1+y_lf_l(\beta,\rho))  - \log(1+y_jf_j(\beta,\rho)) \notag\\
    & \qquad\qquad\qquad + \mu_{x_l} - \nu_{x_l} - \lambda (y_l - y_j) = 0, \label{eq:stat_xl_P4}\\
\frac{\partial \mathcal{L}}{\partial y_l} &= \frac{x_lf_l(\beta,\rho)}{1+y_l f_l(\beta,\rho)} + \mu_{y_l} - \lambda x_l = 0, \label{eq:stat_yl_P4}\\
\frac{\partial \mathcal{L}}{\partial y_j} &= (\beta_l+\beta_j-x_l)\frac{f_j(\beta,\rho)}{1+y_jf_j(\beta,\rho)} \notag \\
    & \qquad\qquad\qquad  + \mu_{y_j} - \lambda (\beta_l+\beta_j-x_l) = 0. \label{eq:stat_yj_P4}
\end{align}

From \eqref{eq:stat_yl_P4} and \eqref{eq:stat_yj_P4}, it follows that
\begin{align}
y_l &= \left[\frac{1}{\lambda}-\frac{1}{f_l(\beta,\rho)}\right]_+, \label{eq:yl_power}\\ 
y_j &= \left[\frac{1}{\lambda}-\frac{1}{f_j(\beta,\rho)}\right]_+. \label{eq:yj_power}
\end{align}
One can check that $y_l=0$ will never be the optimal solution. For $y_l>0$, two cases arise depending on whether $y_j$ is strictly positive or not.

\begin{itemize}
\item Case $y_j=0$. To satisfy the KKT conditions, the second constraint is met with equality, for $\lambda > 0$. Thus, we have $y_l = \frac{ \beta_l\bar{p}_l + \beta_j\bar{p}_j}{x_l}$. From \eqref{eq:yl_power}, we can express
\[
\frac{1}{\lambda} = \frac{ \beta_l\bar{p}_l + \beta_j\bar{p}_j}{x_l} -\frac{1}{f_l(\beta,\rho)}.
\]
When $y_j =0$, it also holds $1/\lambda - 1/f_j(\beta,\rho) \leq 0$. Consequently, from the equation above, we can write
\[
\frac{\beta_l\bar{p}_l + \beta_j\bar{p}_j}{\frac{1}{f_j(\beta,\rho)}-\frac{1}{f_l(\beta,\rho)}} \leq x_l \ .
\]
From \eqref{eq:stat_xl_P4}, we can obtain
\begin{align}\label{eq:case_yj_zero}
&\log\left(1+\left(\frac{ \beta_l\bar{p}_l + \beta_j\bar{p}_j}{x_l}\right)f_l(\beta,\rho)\right) \notag \\ 
&\qquad\qquad\quad - \frac{1}{1+\frac{x_l}{\beta_l\bar{p}_l + \beta_j\bar{p}_j}\frac{1}{f_l(\beta,\rho)}} =   \nu_{x_l} -\mu_{x_l}.
\end{align}
The LHS of \eqref{eq:case_yj_zero} is a function of the form $f(x)=\log(1+x)-\frac{x}{1+x}$, which can be easily shown to be strictly increasing in $x$. Moreover, at $x=0$, $f(x)=0$. So, the LHS of \eqref{eq:case_yj_zero} is positive. Thus, ignoring the constraint on $x_l$, the objective function is strictly increasing for $\dfrac{\beta_l\bar{p}_l + \beta_j\bar{p}_j}{\frac{1}{f_j(\beta,\rho)}-\frac{1}{f_l(\beta,\rho)}} \leq x_l$.

\vspace{0.5cm}
\item Case $y_j > 0$.  For $\gamma > 0$, the average power constraint is met with equality and we have
\begin{align*}
  y_j &=  \frac{\beta_l\bar{p}_l + \beta_j\bar{p}_j-(y_l-y_j)x_l}{\beta_l + \beta_j} \\
	    &=  \frac{\beta_l\bar{p}_l + \beta_j\bar{p}_j-\left(\frac{1}{f_j(\beta,\rho)}-\frac{1}{f_l(\beta,\rho)}\right)x_l}{\beta_l + \beta_j}.
\end{align*} 
Then, we can express 
\[
\frac{1}{\lambda} = \frac{1}{f_j(\beta,\rho)} + \frac{\beta_l\bar{p}_l + \beta_j\bar{p}_j-\left(\frac{1}{f_j(\beta,\rho)}-\frac{1}{f_l(\beta,\rho)}\right)x_l}{\beta_l + \beta_j}.
\]
Since for $y_j>0$, $\frac{1}{\lambda} > \frac{1}{f_j(\beta,\rho)}$, then we obtain
\be\label{eq:cond_yj_largerzero}
\frac{\beta_l\bar{p}_l + \beta_j\bar{p}_j}{\frac{1}{f_j(\beta,\rho)}-\frac{1}{f_l(\beta,\rho)}} > x_l \ .
\ee
Using the expression for $1/\lambda$, we can rewrite \eqref{eq:stat_xl_P4} as
\begin{align}\label{eq:case_yj_largerzero}
&\nu_{x_l} -\mu_{x_l}=\log\left(\frac{f_l(\beta,\rho)}{f_j(\beta,\rho)}\right) \notag \\
&\quad\quad - \frac{\frac{1}{f_j(\beta,\rho)}-\frac{1}{f_l(\beta,\rho)}}{\frac{1}{f_j(\beta,\rho)} + \frac{\beta_l\bar{p}_l + \beta_j\bar{p}_j-\left(\frac{1}{f_j(\beta,\rho)}-\frac{1}{f_l(\beta,\rho)}\right)x_l}{\beta_l + \beta_j}}.
\end{align}
It is clear that the LHS of \eqref{eq:case_yj_largerzero} is decreasing in $x_l$. Moreover, for $x_l\to\infty$, its value is $\log\left(\frac{f_l(\beta,\rho)}{f_j(\beta,\rho)}\right) > 0$.  Therefore, without the constraints on $x_l$, the objective function is also strictly increasing in $x_l$ when the condition \eqref{eq:cond_yj_largerzero} holds.
\end{itemize}
Combining the two cases, the optimal $x_l$ is equal to its maximum allowable value. By using this fact repeatedly, starting from group 1, we establish (i)-(iii).

Now, it remains to show that if no power is allocated to a group, it must be that the corresponding $\beta_j=0$ (see (iv)).
Let us consider the stationary conditions for $\beta_j$ and $\beta$ which are given by \eqref{eq:stat_betaj_power} and \eqref{eq:stat_beta_power}, respectively. We can rewrite them as
\be\label{eq:stat_betaj_2}
\log(1+\bar{p}_jf_j) - \lambda(\bar{p}_j-1) + \nu_j  + \mu =\eta_j 
\ee
and
\be\label{eq:stat_beta_2}
 \sum_{j=1}^L \frac{\beta_j \bar{p}_j}{1+\bar{p}_jf_j(\beta,\rho)}\frac{\partial f_j(\beta,\rho)}{\partial \beta}  = \mu,
\ee
respectively. In obtaining \eqref{eq:stat_beta_2}, we use the fact that $\beta$ must be positive, i.e., $\eta=0$.
The first derivative of $f_j(\beta,\rho)$ over $\beta$ in \eqref{eq:stat_beta_2} can be shown to take the form
\begin{align}
\frac{\partial f_j(\beta,\rho)}{\partial \beta}&=-\frac{f_j(\beta,\rho)}{\beta}\left[1+\frac{g}{1+\frac{\rho}{\beta} (1+g)^2}\right. \notag \\
&\qquad \left. + \frac{2g(1+g)^2(\frac{\rho}{\beta} \gamma_j-1)}{[\gamma_j+(1+g)^2][1+\frac{\rho}{\beta} (1+g)^2]^2} \right], \label{eq:der_f_beta}
\end{align}
where for brevity we denote $g=g(\beta,\rho)$.
The derivative of $f_j(\beta,\rho)$ w.r.t. $\rho$ in \eqref{eq:stat_rho_power} can  be written as follows
\[
\frac{\partial f_j(\beta,\rho)}{\partial \rho}=-\frac{f_j(\beta,\rho)}{\beta}\frac{2g(1+g)^3(\frac{\rho}{\beta} \gamma_j-1)}{[\gamma_j+(1+g)^2][1+\frac{\rho}{\beta} (1+g)^2]^2}.
\]
So we can rewrite \eqref{eq:der_f_beta} in terms of $\frac{\partial f_j(\beta,\rho)}{\partial \rho}$ as
\begin{align}
\frac{\partial f_j(\beta,\rho)}{\partial \beta} &= -\frac{f_j(\beta,\rho)}{\beta}\left[1+\frac{g}{1+\frac{\rho}{\beta} (1+g)^2}  \right] \notag \\
&\qquad + \frac{1}{1+g}\frac{\partial f_j(\beta,\rho)}{\partial \rho}. \label{eq:der_f_beta_2}
\end{align}
Recall that $1+p_jf_j(\beta,\rho)=f_j(\beta,\rho)/\lambda$. Substituting \eqref{eq:der_f_beta_2} into \eqref{eq:stat_beta_2} yields
\begin{align}\label{eq:mu_at_opt}
\mu &= -\frac{\lambda}{\beta}\left[1+\frac{g}{1+\frac{\rho}{\beta} (1+g)^2}  \right]\sum_{j=1}^L \beta_j \bar{p}_j \notag\\
&\quad + \frac{1}{1+g}\sum_{j=1}^L \frac{\beta_j \bar{p}_j}{1+\bar{p}_jf_j(\beta,\rho)}\frac{\partial f_j(\beta,\rho)}{\partial \rho} \notag\\
& \overset{(a)}{=} -\lambda\left[1+\frac{g}{1+\frac{\rho}{\beta} (1+g)^2}  \right]
\end{align}
where in $(a)$ we use the fact that $\sum_{j=1}^L \beta_j \bar{p}_j = \beta$ and the second term of the RHS is zero due to \eqref{eq:stat_rho_power}. Moreover, $(a)$ gives the expression for $\mu$ at the optimal operating points. Plugging $(a)$ into \eqref{eq:stat_betaj_2} with $p_j=0$, we obtain
 \[
-\lambda\frac{g}{1+\frac{\rho}{\beta} (1+g)^2} +\nu_j = \eta_j.
\]
As a result, $\nu_j$ must be strictly positive. This implies that $\beta_j=0$ and the proof is completed.


%

\ifCLASSOPTIONcaptionsoff
  \newpage
\fi



%
%
\bibliographystyle{IEEEtran}
\bibliography{thesis}

%




\end{document}